\providecommand{\U}[1]{\protect\rule{.1in}{.1in}}
\newtheorem{theorem}{Theorem}[section]
\newtheorem{proposition}[theorem]{Proposition}
\newtheorem{lemma}[theorem]{Lemma}
\newtheorem{remark}{Remark}[section]
\newenvironment{proof}[1][Proof]{\noindent\textbf{#1.} }{\ \rule{0.5em}{0.5em}}
\numberwithin{equation}{section}
\DeclareRobustCommand{\lcroof}[1]{
  \hbox{\vtop{\vbox{%
      \hrule\kern 1pt\hbox{%
        $\scriptstyle #1$%
        \kern 1pt}}\kern1pt}%
    \vrule\kern1pt}}
\begin{document}

\author{Weihong Ni $^1$, Corina Constantinescu $^2$, Alfredo D.~Eg\'{\i}dio dos Reis $^3$ \\
	\& V\'eronique Maume-Deschamps $^{4,}$ 
}

\title{Pricing foreseeable and unforeseeable risks in insurance portfolios}
\date{}
\maketitle

\begin{center}
	{\it
		$^1$ Department of Computer Science and Mathematics,
		Arcadia University, {PA}, USA; {niw@arcadia.edu} 
	\\
		$^2$  Institute for Financial and Actuarial Mathematics, Department of Mathematical Sciences, University of Liverpool, UK; c.constantinescu@liverpool.ac.uk
		\\
		$^3$ ISEG and CEMAPRE, Universidade de Lisboa, {PT};
		alfredo@iseg.ulisboa.pt
		\\
		$^4$ {Univ Lyon, Universit\'e Claude Bernard Lyon 1, CNRS UMR 5208, Institut Camille Jordan, F-69622 Villeurbanne, FR; veronique.maume@univ-lyon1.fr}
		\\
	}
\end{center}
\vspace{0.5cm}

\begin{abstract}
%
%
In this manuscript we propose a method for pricing insurance products that cover not only traditional risks, but also unforeseen ones. By considering the Poisson process parameter to be a mixed random variable, we capture the heterogeneity of foreseeable and unforeseeable risks. To illustrate, we estimate the weights for the two risk streams for a real dataset from a Portuguese insurer. To calculate the premium, we set the frequency and severity as distributions that belong to the linear exponential family. Under a Bayesian set-up, we show that when working with a finite mixture of conjugate priors, the premium can be estimated by a mixture of posterior means, with updated parameters, depending on claim histories. We emphasise the riskiness of the unforeseeable trend, by choosing heavy-tailed distributions. After estimating distribution parameters involved using the Expectation-Maximization algorithm, we found that Bayesian premiums derived are more reactive to claim trends than traditional ones.
\\

{\bf Keywords}: Mixed Poisson processes; Foreseeable risks; Unforeseeable risks; Bayesian estimation; Ratemaking; Experience rating; 
Bonus-malus; EM algorithm.
\end{abstract}

\section{Introduction}
The prime objective of this paper is to {develop the parameter} estimation and premium calculation to the extended model introduced by \cite{li2015risk}, where the so-called unforeseeable 
risks were additionally taken into account. These risks refer to those that are not really predictable due to lack of history, for example, a very real situation during the Spring of 2020, 
a pandemic. {Another example are the emerging risks that have been evolving with the ever smarter way of living. The evaluation of these risks could be different from classical methods. For instance, with the introduction of 
autonomous cars, it would possibly change the magnitude of risks on roads, yet we do not know in which direction precisely, even though they were designed to reduce \textit{human error}.} To model 
such unforeseeable features, a defective random variable was incorporated in the risk model by \cite{li2015risk}, inspired in \cite{dubey77}. There, unforeseeable risks were reflected in the claim 
frequencies only. We extend to severity risks.  

We propose a modification on the classical risk model which accounts {for} the accommodation of the unforeseeable risks.  \cite{li2015risk} show such modified model, a developed version of the classical Cram\'er-Lundberg's risk model introduced by \cite{dubey77}, where the Poisson parameter $\lambda$ is a realization of a random variable $ \Lambda $. The premium income is based on an estimation of $ \Lambda $, for instance the posterior mean $\hat{\lambda}(t)= \mathbb{E}[\Lambda|N(t)] $, namely the risk process is driven by the equation 
\begin{equation}
U(t)=u+c \int_{0}^{t} \hat{\lambda}(s)ds -S(t) \,, \; \; t\geq 0\,.
\label{eq:model2}
\end{equation}
Here $ S(t)=\sum^{N(t)}_{j=0}Y_j $ is the aggregate claims up to time $t$, $ u=U(0) $ is the initial surplus,  $ \{Y_j\}_{j=1}^{\infty} $ a sequence of independent and identically distributed 
(briefly, i.i.d$ . $) random variables with common distribution $H_Y(.)$, existing mean $\mu=\mathbb{E}[Y_1]$. {Moreover, } $Y_0 \equiv 0$, and $ \{N(t), t\geq 0\} $ is a Poisson process 
with  intensity $ \lambda$ $(>0)$. Also, $ \{Y_j\}_{j=1}^{\infty} $ is independent of $ \{N(t), t\geq 0\} $. Here $c$ is the time-invariant component of the premium income, with 
$c=(1+\theta)\mathbb{E}[Y_1]$, 
where $\theta$ $(>0)$ is the loading coefficient. 

 \cite{dubey77} considered a positive probability  $\mathbb{P}[\Lambda=0]=p\,\, (>0)$, so that the counting process $ \{N(t)\} $ is a mixed Poisson process conditional on $\Lambda>0$. \cite{li2015risk} generalised the process to a sum of two mixed counting processes, $ \{N(t)= N^{(1)}(t)+ N^{(2)}(t)\}$, where $ \{N^{(1)}(t)\}$ represents the \textit{historical risk stream} of claim counts and  $ \{N^{(2)}(t)\}$ represents the corresponding \textit{unforeseeable risk stream}. For each, randomised intensity parameters are denoted as $\Lambda^{(1)}$ and $\Lambda^{(2)}$, respectively. This is the process we will consider in the rest of the paper. 
 The randomised intensity parameter of  $N^{(1)}(t)$ has a classical behaviour, meaning that it is a positive random variable, or  
$\mathbb{P}\{\Lambda^{(1)} = 0\} = 0$, unlike  in the second process
where 
$\mathbb{P}\{\Lambda^{(2)} = 0\} = p>0$.
 As in \cite{li2015risk}, we set $\hat{\lambda}(t)= \mathbb{E}[\Lambda^{(1)}+\Lambda^{(2)}|N(t)] $ in (\ref{eq:model2}). 

Not much interpretation is given by the authors, \cite{dubey77} and \cite{li2015risk}, to the situation where $p>0$. {We remark that the second stream can be riskier in either severity or frequency or both. If we think of the development of autonomous cars, experiments suggest that they are potentially safer than human driving if you are looking at the accident rates (Waymo Team, 2016\nocite{waimo216}). But once an accident occurs, it could bring in higher damage than that would have been caused by a human (\cite{McCausland2019}). On the other hand, the outbreak of the asbestos crisis in the 1980s raised no awareness until 20 years later when claims reached more than double the original amount (\cite{white2003understanding}). If a model existed at that time to capture the potential risks embedded in the gradually increasing claims earlier, insurance companies could have avoided dramatic losses.}
In short, the introduction of an unforeseeable stream is to accommodate potential uncertainties, either riskier or less risky, whose nature one cannot foresee exactly {\it if and/or when}, and that the actuary feels it's wise to be reflected in the premium. A new virus pandemic could be an example, signs may have been appeared before.

Modelling two different streams of risks for the same portfolio can be done exclusively on the claim count process as \cite{dubey77} and \cite{li2015risk} did, or in the claim severity, or both in claim counts and severity. In this case we consider there is some sort of dependence between $ \{ N(t), t\geq 0 \} $
and sequence $ \{ Y_j \}^{\infty}_{j=1} $. The latter means that the different streams may \textit{bring} diverse severity behaviour. This is not dealt by any of the above cited authors. 

Letting premium adjust according to claim numbers somehow imitates the operation of a Bonus-malus system (BMS). {For instance, \cite{ni2014bonus} worked with Negative Binomial distributed frequencies and Weibull distributed severities, as an extension to classical settings. Another recent work by \cite{cheung2019note} considered a dependence structure between the claim frequency and severity components via conjugate bivariate prior and calculated Bayesian premiums. In this work,} we also incorporate the two risk streams into the claim severity, add appropriate premium calculation and estimation. 
This is important in order to reflect both risk effects in the premium, for fairness and ruin probability compensation.  For the premium calculation/estimation credibility theory is used, parameter estimation procedure uses the Expectation-Maximization  (EM) algorithm.
 EM algorithm is developed in \cite{dempster1977maximum}, a \textit{guided tour} can be found in \cite{couvreur1997algorithm}. We will be using mixture distributions, particular algorithm application for these can be found in 
\cite{redner1984mixture}. 

The use of Bayesian and credibility methods for posterior premium appear as a natural choice (we have set above the posterior mean $\hat{\lambda}(t)$, for instance).  Markovian bonus-malus methods are easy and common choices for posterior rating based on claim counts only and are used for individualise premia. This is not our problem, besides, adding the severity component it necessarily drives actuaries for credibility estimating methods.  Introductory notions on credibility can be found in \cite{klugman2012loss}, Chapters~17-19, and more advanced in \cite{buhlmann2006course}. 

The manuscript is organised as follows. In Section 2 we work only with the claim frequency component of the model, while in Section~\ref{s:sev_comp} we introduce the claim severity, assuming that 
severity brings some information on the stream origin. In Section~\ref{sec:Bayesian} we deal with the premium calculation/estimation via a Bayesian approach. Section~\ref{s:estima} is 
devoted to the parameter estimation via an EM algorithm, on a concrete dataset. {Claim severity data are randomly generated by the chosen mixed distributions considering both non-extreme and extreme cases. Bayesian premiums are computed in the end using the estimated parameters.} Section~\ref{sec:global} presents discussion on a
\textit{global} likelihood procedure to estimate together the claim frequency and severity distributions, 
and in the last section we write some concluding remarks. {Proofs of 
lemmas are given in the appendix.}
\section{Claim frequency component} \label{s:claim_comp}
The total claim numbers over a time period $(0,t]$ within the portfolio is given by 
\begin{equation}
\label{eq:jointprocess}
N(t) = N^{(1)}(t)+N^{(2)}(t)\,.
\end{equation}
Here $\{N^{(1)}(t)\}$ is a mixed Poisson process  with parameter $\Lambda^{(1)}$ and $\{N^{(2)}(t)\}$ is mixed Poisson conditional on $\{\Lambda^{(2)} > 0\}$, meaning the intensity must be strictly positive. The event $\{\Lambda^{(2)} = 0\}$ implies that $\{N^{(2)}(t) = 0\}$, with probability one, since %
  $$\lim_{\lambda\downarrow 0}\mathbb{P}\{N^{(2)}(t)=0|\Lambda^{(2)}=\lambda\}=\lim_{\lambda\downarrow 0}e^{-\lambda\,t}=1\,.$$ 
$\{N(t)\}$ is also a mixed Poisson process with intensity $\Lambda^{(1)}+\Lambda^{(2)}$ given that $N^{(1)}(t)$ and $N^{(2)}(t)$ are independent, see \cite{li2015risk} [Lemma~1]. Under this global process, $\{N(t)=N^{(1)}(t)+N^{(2)}(t)\}$, we have that
when a claim arrives it comes either from the process $ \{N(t)^{(1)}\}$,   
or 
from the process 
$\{N^{(2)}(t)\}$. We note that in practice actuaries observe the realisations of the overall claim process, which means they \textit{guess} which process the  claim belongs to. 

{Let us} define and denote by $\Xi$, the random variable representing the \textit{split rate} in favour of process $\{N^{(1)(t)}\}$ in $\{N(t)\}$. It is well known that for independent Poisson 
processes, in our case for given positive rates $\Lambda^{(1)}=\lambda^{(1)}$ and $\Lambda^{(2)}=\lambda^{(2)}>0$, we have that the \textit{split} between processes ``(1)'' and ``(2)''  is going to be 
given by  probabilities, 
\begin{eqnarray}
\label{eq:epsilon}	
\xi&=&\dfrac{\lambda^{(1)}}{\lambda^{(1)}+\lambda^{(2)}} \;\;\;\, 
 \text{and } \;\;\;\,	1- \xi 
 = 
 \dfrac{\lambda^{(2)}}{\lambda^{(1)}+\lambda^{(2)}}\,, 
\end{eqnarray}
 respectively.
{Unconditionally, considering that $\xi$ is a particular outcome of the random variable $\Xi$, then we can write, \textit{extending} the unconditional mixed process as defined in \eqref{eq:jointprocess}}
 \begin{eqnarray}	
 \label{eq:Xi}
 \Xi&=&\dfrac{\Lambda^{(1)}}{\Lambda^{(1)}+\Lambda^{(2)}} \,.  
  \end{eqnarray}
 Note that
  \begin{equation}
  \label{eq:probxi1}
 \{\Lambda^{(2)}= 0\} \Leftrightarrow \{\Xi=1\}  \Rightarrow \mathbb{P}[\Lambda=0]=\mathbb{P}[\Xi=1]=p\, .
 \end{equation}
 %
 We denote the distribution function of $\Xi$ as $A_\Xi(.)$. 
 
We will now set further assumptions. From now on, we assume that the Poisson intensities follow gamma distributions: $\Lambda^{(1)}\sim Gamma(\alpha_1, \beta_1)$ and, conditional on $ \Lambda^{(2)}>0 $, $\Lambda^{(2)}|\Lambda^{(2)}>0\sim Gamma(\alpha_2, \beta_2)$. For simplicity, in the sequel we denote $\Lambda^{(2)}_+=\Lambda^{(2)}|\Lambda^{(2)}>0$. {Furthermore, without loss of generality, we consider that the number of claims observed within one unit period $N(1)$ to be denoted as $N$.}
Under a more restrictive assumption for the gamma distributions above, we can arrive at the following lemma:
\begin{lemma} \label{l:mix_nb}
If $\beta_1 = \beta_2 = \beta$, then  the random {variable $N=N(1)$} is represented by a mixture of two Negative Binomial random variables, with probability function
\begin{eqnarray*}
\mathbb{P}(N = n) &= & p\binom{n+\alpha_1-1}{n}\left(\frac{\beta}{\beta + 1}\right)^{\alpha_1}\left(\frac{1}{\beta + 1}\right)^n \\
& & \quad \quad +\,  (1-p)\binom{n+\alpha_1+\alpha_2-1}{n}\left(\frac{\beta}{\beta + 1}\right)^{\alpha_1+\alpha_2}\left(\frac{1}{\beta + 1}\right)^n.
\end{eqnarray*}
\end{lemma}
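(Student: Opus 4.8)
The plan is to decompose the distribution of $N$ according to whether the unforeseeable intensity is zero, and to recognise that in each case $N$ is a Gamma-mixed Poisson variable, which is Negative Binomial. By \eqref{eq:probxi1}, the event $\{\Lambda^{(2)}=0\}$ has probability $p$, and on this event $N^{(2)}(1)=0$ almost surely, so that $N=N^{(1)}(1)$; on the complementary event $\{\Lambda^{(2)}>0\}$, of probability $1-p$, one has $N=N^{(1)}(1)+N^{(2)}(1)$ with the positive-conditioned intensity $\Lambda^{(2)}_+$. Writing
\begin{equation*}
\mathbb{P}(N=n)=p\,\mathbb{P}(N=n\mid \Lambda^{(2)}=0)+(1-p)\,\mathbb{P}(N=n\mid \Lambda^{(2)}>0)
\end{equation*}
reduces the problem to identifying the two conditional laws.

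For the first conditional law I would use the standard Gamma--Poisson mixing identity: integrating the Poisson mass function $e^{-\lambda}\lambda^n/n!$ against the $Gamma(\alpha_1,\beta)$ density and evaluating the resulting $\int_0^\infty \lambda^{n+\alpha_1-1}e^{-(1+\beta)\lambda}\,d\lambda=\Gamma(n+\alpha_1)/(1+\beta)^{n+\alpha_1}$ yields exactly the Negative Binomial mass function $\binom{n+\alpha_1-1}{n}\bigl(\tfrac{\beta}{\beta+1}\bigr)^{\alpha_1}\bigl(\tfrac{1}{\beta+1}\bigr)^{n}$, which is the first term in the statement.

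The second conditional law is the crux, and here the assumption $\beta_1=\beta_2=\beta$ is indispensable. Conditional on $\{\Lambda^{(2)}>0\}$, the two streams are independent, so $N=N^{(1)}(1)+N^{(2)}(1)$ is itself a mixed Poisson with intensity $\Lambda^{(1)}+\Lambda^{(2)}_+$. Since the sum of two independent Gamma variables sharing the same rate parameter is again Gamma---specifically $Gamma(\alpha_1,\beta)+Gamma(\alpha_2,\beta)=Gamma(\alpha_1+\alpha_2,\beta)$---the mixing distribution is $Gamma(\alpha_1+\alpha_2,\beta)$, and repeating the mixing computation with shape $\alpha_1+\alpha_2$ gives the second Negative Binomial term. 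Substituting both conditional masses into the total-probability decomposition yields the stated mixture. I expect the main obstacle to be precisely this Gamma additivity step: it is the only place where equality of the rate parameters is used, and without it the convolution of the two intensities would not be Gamma, so the clean Negative Binomial form would fail.
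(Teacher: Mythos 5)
Your proof is correct, but it takes a genuinely different route from the paper's. The paper argues via moment generating functions: it writes $M_N(\rho)=M_{N^{(1)}}(\rho)\,M_{N^{(2)}}(\rho)$, where $N^{(2)}$ has the zero-modified Negative Binomial mgf $p+(1-p)\bigl(\tfrac{1-1/(1+\beta)}{1-e^{\rho}/(1+\beta)}\bigr)^{\alpha_2}$, expands the product into $p\,(\cdot)^{\alpha_1}+(1-p)\,(\cdot)^{\alpha_1+\alpha_2}$, and identifies this as the mgf of the stated mixture; there the hypothesis $\beta_1=\beta_2$ is what makes the two factors share the same base, so that their product is again of Negative Binomial form. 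You instead work directly at the level of mass functions: you split on the defective atom $\{\Lambda^{(2)}=0\}$ of probability $p$, note that on that event $N=N^{(1)}(1)$ is a $Gamma(\alpha_1,\beta)$-mixed Poisson, and that off it $N$ is mixed Poisson with intensity $\Lambda^{(1)}+\Lambda^{(2)}_+\sim Gamma(\alpha_1+\alpha_2,\beta)$ by Gamma additivity --- your identification of Gamma additivity as the place where $\beta_1=\beta_2$ is indispensable is exactly the probabilistic counterpart of the shared-base fact in the mgf proof. Your route is slightly longer but more elementary (no appeal to uniqueness of mgfs to pass from transform back to distribution), and it has a bonus: your two conditional intensities are precisely $Z_1\sim Gamma(\alpha_1,\beta)$ and $Z_2\sim Gamma(\alpha_1+\alpha_2,\beta)$ of Lemma~\ref{l:dominik}, i.e.\ the mixture prior \eqref{eqn:gammamix}, so your argument effectively establishes Lemma~\ref{l:mix_nb} and Lemma~\ref{l:dominik} in one pass, whereas the paper proves them separately. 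One small point to make airtight: your total-probability split uses that $\Lambda^{(1)}$ (hence $N^{(1)}$) is independent of the event $\{\Lambda^{(2)}=0\}$; this holds by the assumed independence of the two streams, but it is worth stating explicitly.
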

\begin{remark}
Note that $N^{(2)}(t)$ follows a `Zero Modified' Negative Binomial distribution, briefly, $N^{(2)}(t)\sim ZM\,\, Negative\text{ } Binomial (\alpha_2,\beta)$. It belongs to the  $(a,b,1)$ recursion class of distributions, see Section~6.6 of \cite{klugman2012loss}. Since
\begin{eqnarray*}
M_{N^{(2)}}(\rho) &=
& \mathbb{E}\left[e^{N^{(2)}}|\Lambda^{(2)}=0\right]\times \mathbb{P}\left[ \Lambda^{(2)}=0
 \right]	
+ \mathbb{E}\left[e^{N^{(2)}}|\Lambda^{(2)}>0\right]\times \mathbb{P}\left[ \Lambda^{(2)}>0\right]\\
 &=
& p+ (1-p) \left(\frac{1-\frac{1}{1+\beta}}{1-e^\rho \frac{1}{1+\beta}}\right)^{\alpha_2}\,,
\end{eqnarray*}
which is a weighted average of the pgf's of  a degenerate distribution at $\{0\}$ and a Negative Binomial, member of the $(a,b,0)$ class. \hfill{$\Box$}
\end{remark}

In the sequel, instead of focusing on the claim arrival process $\{N(t)\}$ which consists of combining two counting processes, $\{N^{(1)}(t)\}$ and $\{N^{(2)}(t)\}$, as explained above, we could model the underlying claim counts via mixing random variables, mixing two Gamma random variables with an independent Bernoulli random variable. See \citep{Dominik},
\begin{lemma} \label{l:dominik}
Assume that the randomized Poisson parameter $\Lambda$ follows a prior distribution as a mixture of two Gamma random variables, such that
\begin{equation}
\Lambda = I\, Z_1 + (1-I)\, Z_2,
\end{equation}
where $Z_1\sim Gamma(\alpha_1, \beta)$ and $Z_2\sim Gamma(\alpha_1+\alpha_2, \beta)$ and $I$ is a Bernoulli$(p)$ random variable independent of $Z_1$ and $Z_2$.
Then the distribution  {of $N=N(1)$} is the mixture of two Negative Binomial random variables as shown in Lemma~\ref{l:mix_nb} above. \hfill{}
\label{l:mix_nb2}
\end{lemma}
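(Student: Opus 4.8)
The plan is to condition on the Bernoulli selector $I$ and thereby reduce the claim to the standard Gamma--Poisson conjugacy applied separately on each branch. The construction forces $\Lambda$ to equal $Z_1$ on the event $\{I=1\}$ and $Z_2$ on $\{I=0\}$, so conditionally on $I$ the intensity is a single Gamma variate; this is exactly the situation in which a Poisson count mixed over a Gamma intensity becomes Negative Binomial.

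First I would record the elementary mixing identity: if $N\mid\Lambda=\lambda\sim Poisson(\lambda)$ and $\Lambda\sim Gamma(\alpha,\beta)$, then
\[
\mathbb{P}(N=n)=\int_0^\infty e^{-\lambda}\frac{\lambda^n}{n!}\,\frac{\beta^{\alpha}}{\Gamma(\alpha)}\lambda^{\alpha-1}e^{-\beta\lambda}\,d\lambda=\binom{n+\alpha-1}{n}\left(\frac{\beta}{\beta+1}\right)^{\alpha}\left(\frac{1}{\beta+1}\right)^{n},
\]
the last equality following by recognising the integral as $\Gamma(n+\alpha)/(\beta+1)^{n+\alpha}$. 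This single computation supplies both Negative Binomial components at once.

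Next, by the law of total probability and the independence of $I$ from $(Z_1,Z_2)$,
\[
\mathbb{P}(N=n)=\mathbb{P}(I=1)\,\mathbb{P}(N=n\mid I=1)+\mathbb{P}(I=0)\,\mathbb{P}(N=n\mid I=0).
\]
On $\{I=1\}$, of probability $p$, we have $\Lambda=Z_1\sim Gamma(\alpha_1,\beta)$, so the first conditional pmf is Negative Binomial of shape $\alpha_1$; on $\{I=0\}$, of probability $1-p$, we have $\Lambda=Z_2\sim Gamma(\alpha_1+\alpha_2,\beta)$, so the second is Negative Binomial of shape $\alpha_1+\alpha_2$. Substituting the two pmfs from the mixing identity reproduces verbatim the expression of Lemma~\ref{l:mix_nb}. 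Equivalently, one may compute the probability generating function $G_N(z)=M_\Lambda(z-1)=p\,M_{Z_1}(z-1)+(1-p)\,M_{Z_2}(z-1)$ and identify it as a convex combination of two Negative Binomial generating functions, invoking uniqueness.

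There is no substantive obstacle here; the only point requiring care is the parameter bookkeeping, namely that $Z_2$ carries shape $\alpha_1+\alpha_2$ rather than $\alpha_2$. This is precisely the encoding that matches the two-stream picture: the branch $\{I=0\}$ corresponds to both streams being active, where the effective intensity $\Lambda^{(1)}+\Lambda^{(2)}_+$ is a sum of independent gammas with common rate $\beta$ and hence has shape $\alpha_1+\alpha_2$, while $\{I=1\}$ corresponds to the historical stream alone. Verifying this alignment, together with the independence of $I$ that lets the branch probabilities factor cleanly, is all that the argument needs.
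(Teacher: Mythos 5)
Your proof is correct and essentially the same as the paper's: the paper writes the marginal density of $\Lambda$ as the two-component Gamma mixture and splits the Poisson--Gamma integral by linearity, which is exactly your conditioning on $I$ followed by the standard Gamma--Poisson conjugacy on each branch. The parameter bookkeeping (shape $\alpha_1+\alpha_2$ for the $Z_2$ branch) and the resulting Negative Binomial mixture match the paper's computation verbatim.
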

%

%
Based on the above Bayesian set-up, considering a prior distribution of Gamma mixtures, we will be able to compute easily a corresponding posterior distribution, for a given distribution of an observable random variable. This is important for experience rating where we can use naturally the theory of Bayesian premium, for making posterior estimation of premia, as well as credibility theory, a particular case in Bayesian premium estimation. This is going to be done  in Section~\ref{sec:Bayesian}.
In fact, the equivalence of the previous two constructions can be explained further. Consider the following remark.
\begin{remark}
	Let $ Z_1=\Lambda^{(1)} $, $ Z_2 = \Lambda^{(1)}+\Lambda^{(2)}_+ $, where $\Lambda^{(2)}_+=\Lambda^{(2)}|\Lambda^{(2)}>0$, as denoted before. Also, let $\Lambda^{(2)} = (1-I)\,\Lambda^{(2)}_+ + I \Lambda^{(2)}_0= (1-I)\,\Lambda^{(2)}_+$, with $\Lambda_0^{(2)} = 0$ and $I\sim \text{Bernoulli}(p)$. 
	We have,
\begin{eqnarray*}
\Lambda &=& I\, Z_1 + (1-I)\, Z_2 = I\,\Lambda^{(1)}+(1-I)\,(\Lambda^{(1)}+\Lambda^{(2)}_+) \\
&=& \Lambda^{(1)} + (1-I) \,\Lambda^{(2)}_+ = \Lambda^{(1)}+\Lambda^{(2)}\,.
\end{eqnarray*}
Since $N(t)|\Lambda \sim$ \text{Poisson} $(\Lambda^{(1)}+\Lambda^{(2)})$, it is true that $N(t) = N^{(1)}(t)+N^{(2)}(t)$ where $N^{(1)}(t)|\Lambda^{(1)}\sim  \text{Poisson}(\Lambda^{(1)})$ and $N^{(2)}(t)|\Lambda^{(2)}_+\sim \text{Poisson}(\Lambda^{(2)}_+)$. \hfill{$\Box$}
\end{remark}

In the model presented so far, we distinguish the \textit{unforeseeable} stream of risks from the \textit{historical} stream by considering two different, and independent, claim counting processes, where the randomized intensities are of different nature. 
%
For now, these parameters only influence the claim numbers, not the claim sizes. 
However, we should also consider the possibility that the claims are affected by the risk type. The introduction of a randomness in the intensities can bring a dependence between number of claims and sizes, setting us away from the classical risk consideration that the claim arrival process is independent of the claim severity sequence. We may assume a twofold approach
\begin{enumerate}
\item At a starting stage, we can consider that for a given $\Lambda^{(i)}=\lambda^{(i)}$, $i=1,2$, we have conditional independence. We mean, for a given ($i$) and $\lambda^{(i)}$, severities and claim counts are independent;
\item At a later stage, an {extended} model with dependence.
\end{enumerate} 

Again, we remark that parameters cannot be observed, only claims can. Furthermore, in practice we only observe the realization of the total claim process $\{N(t)=N^{(1)}(t)+N^{(2)}\}$, so that we \textit{guess}/estimate which process each claim arrival belongs to, then make the severity correspondence, if we set a different distribution for each stream.
 Thus, using data and the model as behaving like is defined in Lemma~\ref{l:mix_nb} we have four parameters to estimate, $p$, $\alpha_1$, $\alpha_2$, and $\beta$. Parameters of the severity distributions will be estimated in addition.  

As defined in \eqref{eq:Xi}, $\Xi$ is the random split rate between processes $\{N^{(1)}(t)\}$ and $\{N^{(2)}\}$ in $\{N(t)\}$. Let's consider event $\{\Lambda^{(1)} + \Lambda^{(2)} = \lambda,\,  \Xi = \xi ,\, 0<\xi<1 \}$ as given, so that $\{N(t)\}$ conditionally is a Poisson process. Consider the following remark:
%
 %
%
\begin{remark}
Remark that $\Xi=\xi$ is the probability that an event arriving from process
	$\{N(t)=N^{(1)}+N^{(2)}\}$ is generated by 	$\{N^{(1)}\}$. Conditional on $\{\Lambda^{(1)} + \Lambda^{(2)} = \lambda, \Xi = \xi\}$ we have that $\Lambda^{(1)}$ is also given. Then $ \Lambda^{(1)}=\lambda \xi $, and  $ \Lambda^{(2)}=\lambda - \Lambda^{(1)} = \lambda(1-\xi)\,$.
If $\xi=1 \Rightarrow \Lambda^{(2)}=0 \Leftrightarrow \Lambda= \Lambda^{(1)}=\lambda$ and $N^{(2)}(t)=0 \; \forall\, t\geq 0$ almost surely.
%
%
%
\end{remark}
%

%
\section{Claim severity component \label{s:sev_comp}}
For now, we assume that the distribution of the individual claim size depends on the stream type,  either \textit{historical} or \textit{unforeseeable}. We keep assuming that the actuary may not be able to recognize which stream the claim comes from. At least, he cannot be certain. 
Let $F$ and $G$ denote the distributions for claim severities in the \textit{historical} and \textit{unforeseeable} streams, respectively. Consider that the individual claim size, taken at random,  say $Y$, follows a distribution function denoted as $H(y)$.
{\begin{proposition} \label {l:severity}
	For a given claim $Y$, its distribution function, conditional on $\Xi=\xi$, can be represented by
	\begin{equation}
	\label{eq:dfYxi}
	\mathbb{P}\{Y\leq y|\Xi=
\xi \}:= H_{\xi}(y) = \xi F(y) + (1-\xi)G(y),
	\end{equation}
	where $\Xi = \frac{\Lambda^{(1)}}{\Lambda^{(1)}+\Lambda^{(2)}}\in(0,1]$, and $F, G$ correspond to the distributions for claim severities in the historical and unforeseeable streams, respectively. The set $\{\Xi = 1\} = \{\Lambda^{(2)} = 0\}$ has a probability measure $p$.
\end{proposition}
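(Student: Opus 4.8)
The plan is to prove this by the law of total probability, conditioning on which stream the randomly chosen claim $Y$ belongs to. The decisive ingredient has already been isolated in the remark preceding this section: conditional on $\{\Lambda^{(1)}+\Lambda^{(2)}=\lambda,\,\Xi=\xi\}$, the thinning property of independent Poisson processes guarantees that a claim arriving from the pooled process $\{N(t)\}$ originates from the historical stream $\{N^{(1)}(t)\}$ with probability exactly $\xi=\lambda^{(1)}/(\lambda^{(1)}+\lambda^{(2)})$, and from the unforeseeable stream $\{N^{(2)}(t)\}$ with probability $1-\xi$. I would therefore introduce the stream-membership event $B=\{Y\text{ is generated by }\{N^{(1)}(t)\}\}$ and record that $\mathbb{P}(B\mid\Xi=\xi)=\xi$.

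Next I would invoke the conditional-independence assumption (the first item of the twofold approach), under which, for a given stream and rate, severities are independent of the counts and follow the stream-specific law. Since $F$ and $G$ depend only on the stream label and not on the absolute level of the rate, conditioning on $\Xi=\xi$ already suffices to assert $\mathbb{P}(Y\leq y\mid B,\Xi=\xi)=F(y)$ and $\mathbb{P}(Y\leq y\mid B^{c},\Xi=\xi)=G(y)$. Combining these via
\begin{equation*}
\mathbb{P}(Y\leq y\mid\Xi=\xi)=\mathbb{P}(Y\leq y\mid B,\Xi=\xi)\,\mathbb{P}(B\mid\Xi=\xi)+\mathbb{P}(Y\leq y\mid B^{c},\Xi=\xi)\,\mathbb{P}(B^{c}\mid\Xi=\xi)
\end{equation*}
then yields $H_{\xi}(y)=\xi F(y)+(1-\xi)G(y)$, which is the asserted representation \eqref{eq:dfYxi}.

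Finally I would treat the boundary value $\xi=1$ separately: by \eqref{eq:probxi1} the event $\{\Xi=1\}=\{\Lambda^{(2)}=0\}$ carries mass $p$, and on it $N^{(2)}(t)=0$ almost surely, so every claim is historical and $H_{1}(y)=F(y)$, consistent with the formula once the $(1-\xi)G(y)$ term vanishes. The main obstacle here is conceptual rather than algebraic: it lies in justifying cleanly that the mixture weight coincides with the thinning probability $\xi$ and that the severity laws detach from the counting structure. Both facts rest on the Poisson splitting property together with the conditional-independence postulate, so I would take care to state the conditioning $\sigma$-algebra precisely and to emphasise that only the ratio $\lambda^{(1)}/(\lambda^{(1)}+\lambda^{(2)})$, and not the absolute intensity $\lambda$, enters the final expression.
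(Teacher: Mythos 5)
Your proposal is correct and follows essentially the same route as the paper's proof: a law-of-total-probability decomposition over stream membership, with the mixture weight $\xi$ coming from the Poisson splitting probability and the severity laws $F$, $G$ attached to the streams, followed by the same consistency check at the boundary case $\{\Xi=1\}=\{\Lambda^{(2)}=0\}$. Your version merely makes explicit (via the event $B$ and the conditional-independence postulate) what the paper leaves implicit in writing $\mathbb{P}\{Y=Y^{(1)}\mid\xi\}=\xi$.
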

\begin{proof}
	The proof is straightforward using the law of total probability.
	\begin{eqnarray*}
		\mathbb{P}\{Y\leq y|\xi\} &=& \mathbb{P}\{Y\leq y | Y = Y^{(1)}, \xi\}\mathbb{P}\{Y = Y^{(1)}| \xi\} + \mathbb{P}\{Y\leq y | Y = Y^{(2)}, \xi\}\mathbb{P}\{Y = Y^{(2)}| \xi\}\\
		&=& F(y)\mathbb{P}\{Y = Y^{(1)}|\xi \} + G(y)\mathbb{P}\{Y = Y^{(2)}| \xi\}\\
		&=&F(y)\cdot\xi + G(y)\cdot(1-\xi).
	\end{eqnarray*}
	where $Y^{(1)}$ and  $Y^{(2)}$ denote random variables of the size of claims stemming from Stream~1 (the \textit{historical stream}) and Stream~2 (the \textit{unforeseeable stream}) 
respectively.  Note that when $\Xi = 1$, i.e., $\Lambda^{(2)}=0$, $N(t) = N^{(1)}(t)$ and the probability of having a claim from Stream 1, i.e., $\mathbb{P}\{Y = Y^{(1)}|\xi =1 \} = 1$, where as 
$\mathbb{P}\{Y = Y^{(2)}| \Xi=1\}=0$. This special situation does not affect Equality~\eqref{eq:dfYxi}. \hfill
\end{proof}}

It has been illustrated in \cite{li2015risk}, see their Lemmas~2 and~3,  that the independence between $\frac{\Lambda^{(1)}}{\Lambda^{(1)}+\Lambda^{(2)}_+}$ 
and $\Lambda^{(1)}+\Lambda^{(2)}_+$, 
conditional on $\Xi \neq 1$, 
result in $\Xi | {\Xi \neq 1}\,$ being distributed as  a $Beta (\alpha_1, \alpha_2)$ law. Then, it is easy to derive the unconditional distribution of $Y$ under these assumptions.
{\begin{proposition}
\label{l:beta}
Assume that $\frac{\Lambda^{(1)}}{\Lambda^{(1)}+\Lambda^{(2)}_+}$ and $\Lambda^{(1)}+\Lambda_+^{(2)}$ are independent, and that $\Lambda^{(1)} \sim Gamma (\alpha_1, \beta)$ and $\Lambda_+^{(2)} \sim Gamma(\alpha_2, \beta)$. A given claim size is distributed according to a mixture law of $F(\cdot)$ and $G(\cdot)$, i.e., with density,
\begin{equation}
h_Y(y) = \nu\cdot f(y) + (1-\nu)\cdot g(y),
\label{eqn:clmdist}
\end{equation}
where $\nu = p+(1-p)\frac{B(\alpha_1+1, \alpha_2)}{B(\alpha_1, \alpha_2)}$.
\end{proposition}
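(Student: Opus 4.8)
The plan is to obtain the unconditional law of $Y$ by integrating the conditional density of Proposition~\ref{l:severity} against the distribution $A_\Xi$ of the split rate $\Xi$. Differentiating \eqref{eq:dfYxi} gives the conditional density $h_\xi(y) = \xi f(y) + (1-\xi) g(y)$, so by the tower property and linearity of expectation (the densities $f(y), g(y)$ being constants with respect to $\Xi$),
\begin{equation*}
h_Y(y) = \mathbb{E}_\Xi\!\left[\Xi f(y) + (1-\Xi) g(y)\right] = \mathbb{E}[\Xi]\, f(y) + \bigl(1 - \mathbb{E}[\Xi]\bigr)\, g(y).
\end{equation*}
Thus the entire problem collapses to identifying the scalar $\nu = \mathbb{E}[\Xi]$; once this is done the mixture form of the density is immediate and the weights automatically sum to one.

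To compute $\mathbb{E}[\Xi]$ I would exploit the mixed nature of the law of $\Xi$ recorded in \eqref{eq:probxi1}: with probability $p$ one sits on the atom $\{\Xi = 1\} = \{\Lambda^{(2)} = 0\}$, while on the complementary event of probability $1-p$ the conditional variable $\Xi \mid \Xi \neq 1$ is, by the assumed independence of $\frac{\Lambda^{(1)}}{\Lambda^{(1)}+\Lambda^{(2)}_+}$ and $\Lambda^{(1)}+\Lambda^{(2)}_+$ together with the Gamma specifications and the classical Gamma--Beta relationship quoted from \cite{li2015risk}, distributed as $Beta(\alpha_1, \alpha_2)$. Decomposing over these two pieces,
\begin{equation*}
\mathbb{E}[\Xi] = p \cdot 1 + (1-p)\, \mathbb{E}[\Xi \mid \Xi \neq 1] = p + (1-p)\, \frac{\alpha_1}{\alpha_1 + \alpha_2},
\end{equation*}
using that a $Beta(\alpha_1,\alpha_2)$ variable has mean $\alpha_1/(\alpha_1+\alpha_2)$.

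It then remains only to rewrite this Beta mean in the announced form. Writing $B(a,b) = \Gamma(a)\Gamma(b)/\Gamma(a+b)$ and applying $\Gamma(a+1) = a\,\Gamma(a)$ yields $B(\alpha_1+1,\alpha_2)/B(\alpha_1,\alpha_2) = \alpha_1/(\alpha_1+\alpha_2)$, whence $\nu = p + (1-p)\,B(\alpha_1+1,\alpha_2)/B(\alpha_1,\alpha_2)$, exactly as claimed. I do not anticipate any genuine obstacle here: the computation is elementary once the conditional density is averaged. The one point that genuinely needs care is bookkeeping of the atom at $\xi = 1$ separately from the absolutely continuous $Beta$ part, so that the mass $p$ is carried correctly through the expectation and not inadvertently folded into the Beta normalisation; handling it cleanly is precisely what makes the weight $\nu$ come out as a convex combination of $1$ and the Beta mean.
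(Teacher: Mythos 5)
Your proof is correct and takes essentially the same route as the paper: both integrate the conditional mixture $H_\xi(y)=\xi F(y)+(1-\xi)G(y)$ against the mixed law of $\Xi$, namely an atom of mass $p$ at $\xi=1$ plus $(1-p)$ times a $Beta(\alpha_1,\alpha_2)$ law obtained from the assumed independence via Lukacs' theorem. Your reduction to $\nu=\mathbb{E}[\Xi]$ with the Beta mean $\alpha_1/(\alpha_1+\alpha_2)$ is just the paper's integral $\int_{(0,1)}\xi\,\frac{\xi^{\alpha_1-1}(1-\xi)^{\alpha_2-1}}{B(\alpha_1,\alpha_2)}\,d\xi = \frac{B(\alpha_1+1,\alpha_2)}{B(\alpha_1,\alpha_2)}$ written in different notation.
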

\begin{proof}
Under the assumptions, we know that $\Xi| {\Xi\neq 1} \frown Beta (\alpha_1, \alpha_2)$ according to \cite{lukacs1955characterization}'s proportion-sum independence theorem. The subsequent proof is 
then straightforward.  $A_\Xi (\xi)$ denotes the distribution function of $\Xi$  (a mixture), we have
\begin{eqnarray*}
	\mathbb{P}\{Y\leq y\} &=& \int_{(0,1]} H_{\xi}(y) d A(\xi) = pF(y) + (1-p)\int_{(0,1)}H_{\xi}(y)\cdot \frac{\xi^{\alpha_1-1}(1-\xi)^{\alpha_2-1}}{B(\alpha_1, \alpha_2)}d\xi\\
	&=&pF(y) + (1-p)\left[\frac{B(\alpha_1+1, \alpha_2)}{B(\alpha_1, \alpha_2)}\cdot F(y) + \frac{B(\alpha_1, \alpha_2+1)}{B(\alpha_1, \alpha_2)}\cdot G(y)\right]\\
	&=&\left[p+(1-p)\frac{B(\alpha_1+1, \alpha_2)}{B(\alpha_1, \alpha_2)}\right]\cdot F(y) + (1-p) \frac{B(\alpha_1, \alpha_2+1)}{B(\alpha_1, \alpha_2)}\cdot G(y)\\
	&=& \nu\cdot F(y) + (1-\nu)\cdot G(y)\,.
\end{eqnarray*}
 \hfill
\end{proof}
}
\begin{remark}
	\label{rem:indep}
Based on the proof of \cite{li2015risk}' Lemma~2 (see their Appendix) we can conclude that, under the conditions of Lemma~\ref{l:beta}, $Y$ and $N$  are independent. This will allow us to consider the parameter estimation using separately the claim frequency and the severity component.
\end{remark} 
%


{In the previous section we specified a distribution for the random variable $N$, as well as its parametrization. The unconditional distribution for $N$ starts from assuming a Poisson distribution, commonly accepted as assumption for claim count data in the line of motor insurance, for instance. We ended up developing a particular mixture of two Negative Binomial distributions, having started from a different Poisson distribution for each stream of risks.  We'll see in Section~\ref{s:estima} that the assumption is reasonable and may fit actual data.}

For the claim severity we will start with a prior assumption that claim severity distribution belong to the {\it linear exponential family}. This  exponential family is a common fit for insurance severity data. First, we assume a particular form for the distribution of individual claim severity as a mixture of two distributions, denoted as $F(\cdot)$ and $G(\cdot)$. They represent the behaviours of the claim severities from the two streams, separately the historical and unforeseeable streams, respectively $F(\cdot)$ and $G(\cdot)$. Then, if we specify $F(\cdot)$ and $G(\cdot)$, we can move for the parameter estimation.  

Using a simple example, from now onwards consider that both the densities of the two streams come from exponentially distributed random variables,
 $Y\sim Exp(\Theta)$, where $\Theta^{-1}$ is the mean, and as discussed, there is a dependence structure embedded in the random parameter  $\Theta$, such that: 
\begin{enumerate}
	\item 	$\Theta = \mu$ with probability~one if we consider the historical stream; 
	\item 
	$\Theta \sim Gamma(\delta, \sigma)$ if we consider the unforeseeable one.
\end{enumerate}
This implies that claims  in the historical stream conform to the exponential distribution with mean $\mu$, whereas those of the unforeseeable stream conform to a Pareto distribution, $Pareto(\delta, \sigma)$. Their respective unconditional densities are
\begin{equation}
\label{eq:sevdist}
f(y) =  \mu e^{-\mu y} \quad \mbox{and}\quad g(y) = \frac{\delta \sigma^\delta}{(\sigma+y)^{\delta+1}}, \quad \mu, \delta, \sigma > 0\, .
\end{equation}
We chose a mixture of a light tail distribution with a heavy tail one, respectively to represent an \textit{historic} behaviour example and an unforeseeable one.

{
	\begin{remark}
		This consideration can actually be connected with the example presented in Corollary 2 in \cite{li2015risk}. If we let $\zeta_1 = \mu$, treat $\zeta_2$ as a Gamma$(\delta, \sigma)$ random variable and integrate the ruin function obtained there over $\zeta_2$, we could in theory represent the ruin probability for our model in a closed form. 
	\end{remark}
	}

Our model will be complete, and ready for parameter estimation/distribution fit, once the premium calculation is defined. 

\section{Bayesian premium} \label{sec:Bayesian}
\subsection{Premium estimation}\label{ss_premium_est}
Premia are set to be calculated for rating periods, most commonly annual periods. For estimating, risk observations must be organized by periods, suppose that we observed the risk for $m$ years, or $m$ rating periods. Then, our observation vector for the claim counts is ${\bf n} = \{n_1,\ldots, n_m\}$. For each year, say year $j$, we'll have $n_j$ individual claims observed, say vector ${\bf y}_j = \{y_{j1},\ldots, y_{jn_j}\}$. This means that observations in year $j$ are set in the two fold vector $(n_j,{\bf y}_j)$. To calculate the annual premium, one only needs the annual amounts ,on aggregate, specifically the observations of $S(1)_i$, $i=1,\dots,m$, since a premium is set for an aggregate quantity. However that would be insufficient in our case, since we consider that claim counts and the severities bring separate and different information on an unforeseeable stream. Another question would rise: Could we separate information contribution from counts and from severities?

In a classical model the two  annual sequences of claim counts and severities are assumed to be independent and the annual pure premium is given by $E\left[S(1)\right]= E\left[N\right]\times E\left[Y\right] $ and the two premium factors are often estimated separately. In view of Remark~\ref{rem:indep} this is also in our case. So, for now we'll deal separately estimation of $E\left[N\right]$ and $E\left[Y\right] $.

Let's pick up the case of estimation of $E\left[N \right]=E\left[N_{m+1}\right]$ for the upcoming year. The problem and procedures for the severity component  $E\left[Y\right] $ are similar. If we proceed in a \textit{classical way}, we would simply estimate the parameters involving that mean $p,\, \alpha_1,\, \alpha_2,\, \beta, $  and we would get an estimate, say  $\hat{E}\left[N \right]$. However, since we have set the model under a Bayesian set-up (recall Lemma~\ref{l:dominik}), we naturally open the Bayesian methodology of insurance premium calculation, where modern credibility theory is included. See \cite[Sections~18.3-4]{klugman2012loss} The hypothesis of conditional independence and identically distribution past annual observations, ${\bf N} = \{N_1,\ldots, N_m\}$,  is assumed, given $\Lambda= \Lambda^{(1)} + \Lambda^{(2)} =\lambda$. 

The Bayesian (pure) premium is defined as the mean of the predictive distribution $E\left[N_{m+1}|{\bf N}={\bf n}\right]$. An interpretation is simple, parameters cannot be observed, only claims are, 
either counts or severities, and they {\it should} bring some information about unknown parameters. So, $E\left[N_{m+1}|{\bf N}={\bf n}\right]$ is an estimate for the premium we want to calculate. {On 
the other hand, it could be easily shown that $E\left[N_{m+1}|{\bf N}={\bf n}\right]= E\left[\mu_{m+1}(\Lambda)|{\bf N}={\bf n}\right]$, where $\mu_{m+1}(\Lambda)= \mu(\Lambda)=\Lambda$ is the hypothetical mean of the Poisson likelihood.
The former expectation is the mean of the predictive distribution. The latter expectation is the expected value of the hypothetical mean $\mu(\Lambda)$, integrated over the posterior distribution. $\mu(\Lambda)$ is also known in insurance risk theory, and credibility theory, as the (unknown) risk premium. 
{The credibility premium is the best linear approximation for the Bayesian premium in the sense of minimizing the expected squared error (see \cite[Section~20.3.7 and Formula~(20.28)]{klugman2012loss}).}
When we work with linear exponential family likelihood and their conjugate priors (no truncations), we can usually get what is called \textit{Exact Credibility} in the literature, see \cite[Section~18.7]{buhlmann1967experience}, i.e., the Bayesian premium equals the credibility premium. Despite the fact that our setting will not result in the exact credibility, we will see that they can be related.} 

%
 
 Consider a generic a random variable, say $X$, following a distribution belonging to the linear exponential family, depending of some parameter, say $\theta$. Its probability or density function is of the form (see \cite[Chapters~5 and 15]{klugman2012loss})
\begin{equation}
\label{eq:lexpof}
f_{X|\Theta}(x|\theta) = \frac{p(x) e^{r(\theta) x}}{q(\theta)}\, .
\end{equation}
If parameter $\theta$ is considered to be an outcome of a random variable $\Theta$, whose prior density function is of the following form, denoted as $\pi(.)$
\begin{equation}
\label{eq:lexpoprior}
\pi(\theta) = \frac{[q(\theta)]^{-k} e^{\mu k r(\theta) } r'(\theta)}{c(\mu, k)}\,,
\end{equation} 
then we are in the presence of a conjugate prior. In the above formulae, \eqref{eq:lexpof} and \eqref{eq:lexpoprior}, $p(.)$ is some function not depending {on the} parameter $\theta$. {On 
one other hand,} $r(.)$, $q(.)$ are functions of the parameter; $c(.)$ is a normalizing function of given parameters. 

It is clear from Sections~\ref{s:claim_comp} and \ref{s:sev_comp} that we work  with situations where 
the risk random variables that are observable follow  distributions that are members of the linear exponential family given by~\eqref{eq:lexpof}.  Associated to these we consider priors that {are  
mixture distributions} of the form~\eqref{eq:lexpoprior}. In this situation we don't have { exact} credibility but
 some mixture of { exact} credibility situations, so that {the} premium estimation is simpler than we would expect, with easy interpretation.

{For} claim counts we consider a Poisson random variable, with random parameter and distributed as {a mixture} \eqref{eqn:gammamix} (in the appendix). For the claim severities, we consider 
an exponential random variable whose parameter is random, distributed as  a mixture of a degenerate distribution and a Gamma (see end of Section~\ref{s:sev_comp}). 

As a general case, consider that the prior distribution is a finite mixture of distributions of family \eqref{eq:lexpoprior} such that 
\begin{equation}
\pi(\theta) = \sum_{i = 1}^\eta \omega_i \frac{[q(\theta)]^{-k_i} e^{\mu_i k_i r(\theta)} r'(\theta)}{c_i(\mu_i, k_i)}\,,
\label{eqn:mixprior}
\end{equation}
where $\omega_i,\, i=1,\cdots, \eta$, are given weights.
\begin{theorem}
		\label{cor:conjmix}
	Suppose that given $\Theta = \theta$, the observable vector of i.i.d.~random variables ${\bf X}= (X_1,\ldots, X_n) $ have common probability function given by \eqref{eq:lexpof}, and that the  
	prior distribution of $\Theta$, $ \pi(\theta) $, is of the form given by \eqref{eqn:mixprior}. Then, the posterior distribution, denoted as $\pi(\theta | {\bf x})$ with ${\bf x} = \{x_1, \ldots, x_n\}$, is of a mixture form as \eqref{eqn:mixprior}:
		\begin{eqnarray}
	\pi(\theta | {\bf x}) &=& \sum_i \omega^*_i \frac{[q(\theta)]^{-k^*_i} e^{\mu^*_i k^*_i r(\theta)} r'(\theta)}{c_i(\mu^*_i, k^*_i)},
	\end{eqnarray}
	where
	\begin{eqnarray}
	\mu^*_i &=& \frac{\mu_i k_i+\sum_j x_j}{k_i+n},\\
	k^*_i &=& k_i+n,\\[0.25cm]
	%
	%
	w^*_i &=& \frac{\omega_i \frac{c_i(\mu^*_i, k^*_i)}{c_i(\mu_i, k_i)}}{\sum_i \omega_i \frac{c_i(\mu^*_i, k^*_i)}{c_i(\mu_i, k_i)}}\,.
	\end{eqnarray}
\end{theorem}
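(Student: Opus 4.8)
The plan is to apply Bayes' theorem directly and to exploit the fact that the sample likelihood is a factor common to every summand of the mixture prior, so that the conjugacy of each individual component does all the work. First I would write the joint likelihood of the i.i.d.\ sample under \eqref{eq:lexpof} as
\begin{equation*}
\prod_{j=1}^n f_{X|\Theta}(x_j|\theta) = \frac{\left(\prod_{j=1}^n p(x_j)\right) e^{r(\theta)\sum_j x_j}}{[q(\theta)]^n}\,,
\end{equation*}
noting that the factor $\prod_j p(x_j)$ does not depend on $\theta$ and will be absorbed into the overall normalizing constant of the posterior.

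Next I would form the unnormalized posterior as the product of this likelihood with the mixture prior \eqref{eqn:mixprior}. Because the likelihood multiplies each summand in the same way, I can distribute it term by term; for the $i$-th component this produces
\begin{equation*}
\frac{\omega_i}{c_i(\mu_i,k_i)}\,[q(\theta)]^{-(k_i+n)}\, e^{r(\theta)\left(\mu_i k_i + \sum_j x_j\right)}\, r'(\theta)\,.
\end{equation*}
The key algebraic step is then to read off the updated parameters by matching this kernel against the generic conjugate form in \eqref{eq:lexpoprior}: the exponent on $q(\theta)$ gives $k_i^* = k_i + n$, and the coefficient of $r(\theta)$ forces $\mu_i^* k_i^* = \mu_i k_i + \sum_j x_j$, that is $\mu_i^* = (\mu_i k_i + \sum_j x_j)/(k_i + n)$, matching the claimed formulae.

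The only slightly delicate bookkeeping is the treatment of the normalizing constants, and this is precisely where the mixture weights get updated. Each $i$-th kernel above is missing its proper normalizer $c_i(\mu_i^*,k_i^*)$, so I would multiply and divide by it, rewriting the summand as
\begin{equation*}
\frac{\omega_i\, c_i(\mu_i^*,k_i^*)}{c_i(\mu_i,k_i)}\cdot \frac{[q(\theta)]^{-k_i^*} e^{\mu_i^* k_i^* r(\theta)} r'(\theta)}{c_i(\mu_i^*,k_i^*)}\,,
\end{equation*}
in which the second factor is now a genuine probability density integrating to one in $\theta$ by the defining property of $c_i$. Integrating the whole unnormalized posterior over $\theta$ therefore yields the total mass $\sum_i \omega_i\, c_i(\mu_i^*,k_i^*)/c_i(\mu_i,k_i)$, and dividing by this quantity delivers exactly the stated weights $w_i^*$, completing the argument. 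I do not anticipate a genuine obstacle: the result is pure single-component conjugacy combined with the linearity of the mixture, and the one thing that must be handled carefully is keeping the ratio $c_i(\mu_i^*,k_i^*)/c_i(\mu_i,k_i)$ attached to $\omega_i$ when renormalizing, rather than discarding it as an irrelevant constant.
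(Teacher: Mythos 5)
Your proposal is correct and follows essentially the same route as the paper's own proof: apply Bayes' theorem, distribute the common exponential-family likelihood across the mixture components, match each resulting kernel to the conjugate form \eqref{eq:lexpoprior} to read off $\mu_i^*$ and $k_i^*$, and use the normalizing constants $c_i(\mu_i^*,k_i^*)/c_i(\mu_i,k_i)$ to obtain the updated weights $w_i^*$. The only cosmetic difference is that you work with the unnormalized posterior and normalize at the end, whereas the paper carries the full Bayes ratio throughout; the algebra is identical.
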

\begin{proof}
	With observations ${\bf X} = {\bf x}$, the posterior distribution is
	\begin{eqnarray*}
	\pi(\theta | {\bf x}) &=& \frac{ \frac{\prod_j p(x_j)e^{r(\theta) \sum_j x_j} }{[q(\theta)]^n} \cdot \sum_{i = 1}^m \omega_i \frac{[q(\theta)]^{-k_i} e^{\mu_i k_i r(\theta)} r'(\theta)}{c_i(\mu_i, k_i)} }{\displaystyle\int_{\Theta} \left(\frac{\prod_j p(x_j)e^{r(\theta) \sum_j x_j} }{[q(\theta)]^n} \cdot \sum_{i = 1}^m \omega_i \frac{[q(\theta)]^{-k_i} e^{\mu_i k_i r(\theta)} r'(\theta)}{c_i(\mu_i, k_i)} \right) \,\rm d\theta} \\[0.25cm]
	%
	%
	&=&\frac{\sum_i \omega_i \frac{[q(\theta)]^{-k_i-n} e^{(\mu_i k_i+\sum_j x_j) r(\theta)} r'(\theta)}{c_i(\mu_i, k_i)}}{\sum_i \omega_i \displaystyle{\int_{\Theta}} \frac{[q(\theta)]^{-k_i-n} e^{(\mu_i k_i+\sum_j x_j) r(\theta)} r'(\theta)}{c_i(\mu_i, k_i)} \,\rm d\theta }
	\\[0.25cm]
	%
	%
	&=&\frac{\sum_i \omega_i \frac{[q(\theta)]^{-k_i-n} e^{(\mu_i k_i+\sum_j x_j) r(\theta)} r'(\theta)}{c_i(\mu_i, k_i)} }{\sum_i \omega_i \frac{c_i(\mu^*_i, k^*_i)}{c_i(\mu_i, k_i)}\displaystyle{\int_\Theta} \frac{[q(\theta)]^{-k^*_i} e^{k^*_i\mu^*_i r(\theta)} r'(\theta)}{c_i^*(\mu^*_i, k^*_i)}\,\rm d\theta }  
	%
	%
	= \sum_i \omega^*_i \frac{[q(\theta)]^{-k^*_i} e^{\mu^*_i k^*_i r(\theta)} r'(\theta)}{c_i(\mu^*_i, k^*_i)}.
	\end{eqnarray*}
\hfill
\end{proof}

From the posterior distribution,  we calculate  the mean, and thus we can calculate the Bayesian premium, as an estimate for the risk premium [see \cite{klugman2012loss}, Chapters 17 and 18] defined as $\mathbb{E}[X|\theta]=\mu(\theta)$. The Bayesian premium is given by estimate for next rating period, $n+1$, $\mathbb{E}[X_{n+1}|{\bf X} = {\bf x}]$ (with $n$, ${\bf X}$ and ${\bf x}$  as defined in Theorem~\ref{cor:conjmix}) and can be calculated via the posterior distribution $\pi(\theta | {\bf x})$, as
\begin{equation*}
\mathbb{E}[X_{n+1}|{\bf x}]=\mathbb{E}[\mu(\Theta)|{\bf x}]\,.
\end{equation*}
Whenever this expectation is a linear function on the observations we have exact credibility and we can write { explicitly} the estimate as a weighted average of the form 
\begin{equation}\label{eq_credi_form}
\mathbb{E}[\mu(\Theta)|{\bf x}]= w\,\bar{x}+(1-w)\,\mathbb{E}[\mu(\theta)] \,,
\end{equation}
where $w$ is the weight, function of $n$, $\bar{x}$ is the empirical mean of the observations and $\mathbb{E}[\mu(\theta)]$ is the (known) collective or manual premium.
 
As said in Remark~\ref{rem:indep}, we can calculate the premium components, separately for the claim counts and severity. Thus, next we are going to consider the claim frequency and then the severity component.

%
%
\subsection{Frequency component}
\label{sec:bayfreq}
If we consider a prior to be a Gamma mixture of the form given by \eqref{eqn:gammamix}, we reach to a posterior with a similar mixture, with updated parameters. This is {stated} in 
Lemma~\ref{l:pstGmix} that follows.
\begin{lemma}\label{l:pstGmix}
	Let the prior distribution be \eqref{eqn:gammamix}. With a Poisson model distribution, the posterior distribution is still a mixture of Gamma distributions, with updated parameters:
	\begin{eqnarray}
	\pi(\lambda | {\bf n}) &=& w \, \cdot \frac{ (\beta+m)^{\sum_j n_j +\alpha_1}\lambda^{\sum_j n_j +\alpha_1-1}e^{-(m+\beta) \lambda} }{\Gamma(\sum_j n_j+\alpha_1)} \nonumber
	\\[2.5mm]
	&&+ (1-w)\cdot \frac{ (\beta+m)^{\sum_j n_j +\alpha_1+\alpha_2}\lambda^{\sum_j n_j +\alpha_1+\alpha_2-1}e^{-(m+\beta) \lambda} }{\Gamma(\sum_j n_j+\alpha_1+\alpha_2)},
	\label{eqn:posterior}
	\end{eqnarray}
	where
	\begin{eqnarray}
w &=& \frac{1}{1+G(p, \alpha_1, \alpha_2, \beta, {\bf n})} \label{eq:w}\\[1.75mm]
	G(p, \alpha_1, \alpha_2, \beta, {\bf n}) & = & \frac{1-p}{p}\cdot\frac{B(\alpha_1, \alpha_2)}{B(\sum_j n_j +\alpha_1, \alpha_2)}\cdot\left(\frac{\beta}{\beta+m}\right)^{\alpha_2}, \nonumber 
\end{eqnarray}
	 ${\bf n} = \{n_1,\ldots, n_m\}$ is the vector of claim count observations and $m$ is the sample size ($\sum_j \cdot=  \sum_{j=1}^m \cdot $).
\end{lemma}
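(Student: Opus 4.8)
The plan is to apply Bayes' theorem directly, exploiting the conjugacy of each Gamma component to the Poisson likelihood; the result can also be read off as the Poisson--Gamma specialization of Theorem~\ref{cor:conjmix}, but a direct verification is cleaner here. Writing the prior \eqref{eqn:gammamix} as a $p$-weighted sum of a $Gamma(\alpha_1,\beta)$ density and a $Gamma(\alpha_1+\alpha_2,\beta)$ density, and noting that the conditionally i.i.d.\ Poisson observations ${\bf n}=\{n_1,\ldots,n_m\}$ give joint likelihood $L(\lambda)\propto \lambda^{\sum_j n_j}e^{-m\lambda}$, the posterior is proportional to $\pi(\lambda)L(\lambda)$.

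First I would multiply the likelihood into each mixture term separately. For the first component this produces, up to factors not depending on $\lambda$,
\begin{equation*}
\lambda^{\sum_j n_j+\alpha_1-1}e^{-(\beta+m)\lambda},
\end{equation*}
which I recognize as an unnormalized $Gamma(\sum_j n_j+\alpha_1,\,\beta+m)$ density; the second term analogously yields $Gamma(\sum_j n_j+\alpha_1+\alpha_2,\,\beta+m)$. This already establishes that the posterior has the two-Gamma mixture shape of \eqref{eqn:posterior}, with shape parameters shifted by the total count $\sum_j n_j$ and the common rate updated from $\beta$ to $\beta+m$.

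The remaining task is to pin down the weight $w$. I would compute the (unnormalized) mass of each component by integrating $\lambda$ out, using $\int_0^\infty \lambda^{a-1}e^{-b\lambda}\,\rm d\lambda=\Gamma(a)/b^{a}$ together with the normalizing constants $\beta^{\alpha_1}/\Gamma(\alpha_1)$ and $\beta^{\alpha_1+\alpha_2}/\Gamma(\alpha_1+\alpha_2)$ carried by the two prior terms. Calling these masses $C_1$ and $C_2$, we have $w=C_1/(C_1+C_2)=1/(1+C_2/C_1)$, so it suffices to simplify
\begin{equation*}
\frac{C_2}{C_1}=\frac{1-p}{p}\cdot\frac{\Gamma(\alpha_1)}{\Gamma(\alpha_1+\alpha_2)}\cdot\frac{\Gamma(\sum_j n_j+\alpha_1+\alpha_2)}{\Gamma(\sum_j n_j+\alpha_1)}\cdot\left(\frac{\beta}{\beta+m}\right)^{\alpha_2},
\end{equation*}
the common factor $\big(\prod_j n_j!\big)^{-1}$ having cancelled out of the ratio.

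The main (and essentially only) obstacle is recasting this gamma-function ratio into the stated $G(p,\alpha_1,\alpha_2,\beta,{\bf n})$, which is pure bookkeeping with the Beta identity $B(a,b)=\Gamma(a)\Gamma(b)/\Gamma(a+b)$. Writing $\Gamma(\alpha_1)/\Gamma(\alpha_1+\alpha_2)=B(\alpha_1,\alpha_2)/\Gamma(\alpha_2)$ and $\Gamma(\sum_j n_j+\alpha_1+\alpha_2)/\Gamma(\sum_j n_j+\alpha_1)=\Gamma(\alpha_2)/B(\sum_j n_j+\alpha_1,\alpha_2)$, the two $\Gamma(\alpha_2)$ factors cancel and $C_2/C_1$ collapses exactly to $G$, giving $w=1/(1+G)$ as claimed. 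The only care required is tracking the prior normalizing constants correctly through the integration.
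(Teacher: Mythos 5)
Your proposal is correct and follows essentially the same route as the paper's own proof: a direct application of Bayes' theorem, multiplying the Poisson likelihood into each Gamma component, recognizing the updated Gamma kernels, and identifying the weight $w=1/(1+G)$ from the ratio of the two components' integrated masses via the identity $B(a,b)=\Gamma(a)\Gamma(b)/\Gamma(a+b)$. Your write-up in fact makes explicit the Beta-function bookkeeping that the paper leaves implicit in its final step.
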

 For a fixed individual claim size, The pure premium estimator will be considered as the posterior mean of the claim frequency component, i.e.,
\begin{equation}
\label{eq:BayesPremium_N}
\mathbb{E}[N_{m+1}| {\bf n}] = \mathbb{E}[\Lambda | {\bf n}]= w\cdot \frac{\sum_j n_j +\alpha_1}{\beta+m} + (1-w)\cdot \frac{\sum_j n_j +\alpha_1+\alpha_2}{m+\beta}\,,
\end{equation}
where $w$ is given by \eqref{eq:w}. 

 We note that the premium estimate in \eqref{eq:BayesPremium_N} {does not have the credibility form} as it may look at first sight (see \cite[Chapters 17 and 18]{klugman2012loss}). 
The $G(\cdot)$ function depends on the vector ${\bf n}$, through a Beta function, so we do not have a posterior linearity here. Furthermore, due to 
Stirling's formula we have that the ratio of the Gamma functions 
 $\Gamma(n+a)/\Gamma(n+b)\sim n^{a-b}$ as $n\rightarrow\infty$, 
 allowing us to write that the function 
 $$
 G(p, \alpha_1, \alpha_2, \beta, {\bf n})\sim \frac{1-p}{p}\frac{\Gamma(\alpha_1)}{\Gamma(\alpha_1+\alpha_2)}
 \left( \frac{\beta \sum_j n_j}{\beta+m}\right)^{\alpha_2} \text{ as } \sum_j n_j \rightarrow \infty\,. 
 $$ 
 
 In fact, the Bayesian premium calculated is a mixture of two credibility components. For instance, the first component can be split into form given by \eqref{eq_credi_form}: 
 \begin{equation}
 \label{eq:BayesPremium_N1}
  \frac{\sum_j n_j +\alpha_1}{\beta+m} =
  \frac{m}{\beta+m}\cdot \left(\frac{\sum_j n_j}{m}\right) +
   \frac{\beta}{\beta+m}\cdot  \left(\frac{\alpha_1}{\beta}\right) \,.
 \end{equation}
 {It} means it can be viewed as the credibility premium, as if there were only historical claims. Similarly, $\frac{\sum_j n_j +\alpha_1+\alpha_2}{{m+\beta}}$ gives a corresponding formula when 
considering the existence of an unforeseeable stream. However this component keeps information on the historical/foreseeable stream through quantities $\alpha_1$ and $\beta$. 
\subsection{Severity component}
For the severity component, when claims are exponentially distributed with parameter $\Theta$, random variable,  we  can set the prior as
\begin{equation}
\pi(\theta) = \nu\Delta_\theta(\{\mu\}) + (1-\nu) \frac{\theta^{\delta-1} \sigma^{\delta} e^{-\sigma \theta}}{\Gamma(\delta)}\,, 
\label{eqn:sizemix}
\end{equation}
where $\Delta_\theta(\{\mu\}) = \mathbf{1}_{\{\mu\}}(\theta)$ is the Dirac measure, and $ \nu $ is as given in \eqref{eqn:clmdist}. Given $ \Theta=\theta $, the conditional  distribution of a single severity is exponential with mean $1/\theta$. Thus, the posterior is also a mixture. 

We are calculating a premium estimate for the severity component only, based on the observed single quantities. Let's consider that the sample is generically of  size $m^\ast$ and the observation vector is ${\bf y} = \{y_1,\ldots, y_{m{^\ast}}\}$. If the sample size of claim counts is $m$, then $m^\ast=\sum_{j=1}^{m}n_j\,$, $n_j\geq 1$. The posterior distribution is set in the following lemma and the premium form is given in \eqref{eq:BayesPremium_Y}. 
\begin{lemma}\label{l:sevpost}
	Let the prior distribution be \eqref{eqn:sizemix} and the conditional  distribution of a single severity, given $\Theta=\theta$, be exponential with mean $1/\theta$. The posterior distribution is still a mixture distribution in terms of the prior form with updated parameters, such that
	\begin{equation}
	\pi(\theta | {\bf y}) = \omega\cdot \Delta_\theta(\{\mu\}) + (1-\omega) \cdot \frac{ (\sigma+\sum_i y_i)^{{m^\ast}+\delta}\, \theta^{{m^\ast} +\delta-1} \, e^{-(\sigma+\sum_i y_i) \theta} }{\Gamma({m^\ast}+\delta)},
	\label{eqn:sizeposterior}
	\end{equation}
	where
	\begin{eqnarray}
\omega &=&	\frac{1}{1+\varphi(\nu, \mu, \delta, \sigma, {\bf y})} \label{eq:omega} \\[0.15cm]
	\varphi(\nu, \mu, \delta, \sigma, {\bf y}) & = & \frac{1-\nu}{\nu}\cdot\frac{\Gamma(m^\ast+\delta) \sigma^{{m^\ast}+\delta}}{\Gamma(\delta) (\sigma+\sum_i y_i)^{{m^\ast}+\delta}}\cdot \mu^{-{m^\ast}} e^{\mu \sum_i y_i}\,,\nonumber
	\end{eqnarray}
	with $ \nu $ as given in \eqref{eqn:clmdist},  ${\bf y} = \{y_1,\ldots, y_{m{^\ast}}\}$ and $\sum_i \cdot =\sum_{i=1}^{m^\ast} \cdot$.
\end{lemma}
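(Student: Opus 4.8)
The plan is to obtain the posterior directly from Bayes' theorem, treating the prior \eqref{eqn:sizemix} as a genuine mixed atomic--continuous measure rather than a pure density. This is really the degenerate-component analogue of Theorem~\ref{cor:conjmix}: one of the two mixture pieces has collapsed to a Dirac mass and therefore cannot be written in the conjugate form \eqref{eqn:mixprior}, so a direct computation is cleaner than invoking that theorem. First I would record the exponential likelihood of the sample, $L(\theta)=\prod_{i=1}^{m^\ast}\theta\,e^{-\theta y_i}=\theta^{m^\ast}e^{-\theta\sum_i y_i}$, and form the joint object $L(\theta)\,\pi(\theta)$. Since the prior splits into the atom $\nu\,\Delta_\theta(\{\mu\})$ and the continuous part $(1-\nu)\,\theta^{\delta-1}\sigma^\delta e^{-\sigma\theta}/\Gamma(\delta)$, the joint splits into two pieces that I would handle separately.

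For the atomic piece, integrating $L(\theta)$ against $\Delta_\theta(\{\mu\})$ merely evaluates the likelihood at $\theta=\mu$, so this piece carries total mass $\nu\,\mu^{m^\ast}e^{-\mu\sum_i y_i}$ and contributes a surviving point mass located at $\mu$. For the continuous piece I would use exponential--Gamma conjugacy: the product equals $(1-\nu)\frac{\sigma^\delta}{\Gamma(\delta)}\,\theta^{m^\ast+\delta-1}e^{-(\sigma+\sum_i y_i)\theta}$, which is a $Gamma(m^\ast+\delta,\;\sigma+\sum_i y_i)$ kernel; integrating it over $(0,\infty)$ gives the continuous mass $(1-\nu)\frac{\sigma^\delta\,\Gamma(m^\ast+\delta)}{\Gamma(\delta)\,(\sigma+\sum_i y_i)^{m^\ast+\delta}}$, while the normalized kernel is exactly the updated Gamma density appearing in \eqref{eqn:sizeposterior}.

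The marginal likelihood is then the sum of these two masses, and dividing each piece by that sum yields the posterior as the mixture \eqref{eqn:sizeposterior}: an atom at $\mu$ with weight $\omega$ proportional to the atomic mass, and the updated $Gamma(m^\ast+\delta,\sigma+\sum_i y_i)$ with weight $1-\omega$. To reach the stated form \eqref{eq:omega}, I would divide numerator and denominator of $\omega$ by $\nu\,\mu^{m^\ast}e^{-\mu\sum_i y_i}$, so that $\omega=1/(1+\varphi)$ with $\varphi$ identified as the ratio of the continuous mass to the atomic mass, i.e.\ $\varphi=\frac{1-\nu}{\nu}\cdot\frac{\Gamma(m^\ast+\delta)\,\sigma^{\delta}}{\Gamma(\delta)\,(\sigma+\sum_i y_i)^{m^\ast+\delta}}\cdot\mu^{-m^\ast}e^{\mu\sum_i y_i}$.

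The Gamma integration is entirely routine; the one point that genuinely needs care --- and the main obstacle --- is the measure-theoretic bookkeeping for the mixed prior. Because the atomic component is singular with respect to Lebesgue measure, one cannot simply divide densities: I would argue that conditioning preserves the atom at $\mu$ with a weight obtained by evaluating the likelihood factor $\theta^{m^\ast}e^{-\theta\sum_i y_i}$ pointwise at $\theta=\mu$, and then verify that the two conditional weights sum to one. Once that reweighting is justified rigorously, reading off $\omega$, $1-\omega$ and the updated Gamma parameters is immediate.
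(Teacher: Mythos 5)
Your proof is correct and follows essentially the same route as the paper's own proof: Bayes' theorem applied directly to the mixed atomic--continuous prior, evaluating the likelihood at $\theta=\mu$ on the atom, exponential--Gamma conjugacy for the continuous component, and normalization by the sum of the two marginal masses, giving $\omega=1/(1+\varphi)$ with $\varphi$ the ratio of the continuous mass to the atomic mass.

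One discrepancy is worth flagging, and it is in your favour. Your computation yields $\varphi=\frac{1-\nu}{\nu}\cdot\frac{\Gamma(m^\ast+\delta)\,\sigma^{\delta}}{\Gamma(\delta)\,(\sigma+\sum_i y_i)^{m^\ast+\delta}}\cdot\mu^{-m^\ast}e^{\mu\sum_i y_i}$, with $\sigma^{\delta}$, whereas the lemma as stated (and the paper's proof) carries $\sigma^{m^\ast+\delta}$. Your version is the correct one: the continuous marginal mass equals
\begin{equation*}
\int_0^\infty \theta^{m^\ast}e^{-\theta\sum_i y_i}\,(1-\nu)\,\frac{\sigma^{\delta}\theta^{\delta-1}e^{-\sigma\theta}}{\Gamma(\delta)}\,d\theta
=(1-\nu)\,\frac{\sigma^{\delta}\,\Gamma(m^\ast+\delta)}{\Gamma(\delta)\,(\sigma+\sum_i y_i)^{m^\ast+\delta}},
\end{equation*}
since the Gamma prior's normalizing constant contributes $\sigma^{\delta}$ and the exponential likelihood contributes no power of $\sigma$. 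The exponent $m^\ast+\delta$ on $\sigma$ in the paper appears to be a typo in the intermediate step of its proof that propagated into the statement of the lemma; the mixture form of the posterior \eqref{eqn:sizeposterior} and the updated Gamma parameters are unaffected, only the explicit expression of the weight $\omega$.
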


As in \eqref{eq:BayesPremium_N}, the posterior mean does not exhibit linearity in the observations. The Bayesian estimator of the severity component is given by the weighted average between the mean of the exponential random variable with parameter ($\mu$) and the mean of the  posterior $Gamma({m^\ast}+\delta, \sigma+\sum_i y_i)$ random variable, namely
\begin{equation}
\label{eq:BayesPremium_Y} 
\mathbb{E}[Y_{m+1}| {\bf y}] = \mathbb{E}[\mu(\Theta)|{\bf x}] = \omega \, \frac{1}{\mu} + (1-\omega)\,  \frac{\sum_{i=1}^{m^\ast} y_i+\sigma}{{m^\ast} + \delta - 1}, 
\end{equation}
where $\omega$ is defined in \eqref{eq:omega}. Interpretation of Formula~\eqref{eq:BayesPremium_Y} is similar to that of \eqref{eq:BayesPremium_N}.

\begin{remark}

{ If $m^*= 0$ then by convention, $\displaystyle \sum_i y_i = 0$ and 
\[
\varphi(\nu, \mu, \delta, \sigma, {\bf y}) \sim \frac{1-\nu}{\nu}.
\]
}

\end{remark}

Bayesian premium $P_{m+1}$ is an estimate for the next rating period $m+1$ and it can be derived from \eqref{eq:BayesPremium_N}, \eqref{eq:BayesPremium_Y},  \eqref{eq:w} and \eqref{eq:omega}, as
\begin{eqnarray}
	&&\mathbb{E}[P_{m+1}| {\bf n} ,{\bf y}] 
	= \mathbb{E}[N_{m+1}| {\bf n}] \mathbb{E}[Y_{m+1}| {\bf y}]  \\ \nonumber
	&=& \left( w \cdot \frac{\sum_{j=1}^{m} n_j +\alpha_1}{\beta+m} + (1-w)\cdot \frac{\sum_{j=1}^{m} n_j +\alpha_1+\alpha_2}{m+\beta}\right) \left(\omega \cdot \frac{1}{\mu} + (1-\omega) \cdot \frac{\sum_{j=1}^{m^\ast} y_i+\sigma}{m^\ast + \delta - 1}\right).	
\label{eqn:prem}
\end{eqnarray}
In this formula we have a set of prior parameters that are most often unknown, and they need to be estimated from observed data, leading to the \textit{Empirical Bayesian Premium}. This is not an easy task, especially when we have to deal with mixtures of distributions, meaning a large number of parameters to estimate. 

\section{Parameter estimation, with a real data example} 
\label{s:estima}

 In this section, we explain how we employed the Expectation-Maximization (EM) algorithm to estimate the prior parameters of our model based on a data set from a Portuguese insurer. Unfortunately we could not find a {dataset} having both the claim counts and the corresponding severities. Often data is recorded on {an} aggregate {level}. The implementation of the algorithm {depends} on the distribution choice. We assumed the use of {a mixed parametric distribution for the severity as an illustration for our proposed model}. 
 
 {Since we are dealing with mixed distributions}, the direct use of the Maximum Likelihood Estimation method is {not always computationally tractable}. We employ the EM algorithm, to find  \textit{best fit} estimates. {Generally speaking, it is an iterative} method to find {estimates} for model parameters when dealing with incomplete data, missing data points, or unobserved (hidden) latent variables. See  \cite{dempster1977maximum} and \cite{couvreur1997algorithm}.
{Estimating parameters for mixed distributions can be considered as a case with unobserved latent variables. It works by initializing parameters according to which probabilities for each possib
value of the latent variable can be computed. Then {the probabilities of the latent variable are used} to derive better estimates of the parameters. The process continues until convergence.} 
\subsection{Data Description}
We were provided with a quarterly claim count dataset from a Portughese motor insurance portfolio. Namely, it records the total number of claims arising from the portfolio every quarter and there were a total of approximately 180 quarters recorded. Therefore, we could treat each entry as an observation for $N(t)$ for a fixed $t$ which stands for a quarter here.\textit{We start to write $N$ for short in the sequel since $N(t)$ is a random variable for fixed $t$.} We took care of the Third Party Liability claims only as those are the obligatory components for a car insurance policy. It was assumed that the portfolio is closed over the underlying period and that claim counts for each quarter are independent observations for $N$. 

Table~\ref{t:ctstats} shows a brief summary of the data under consideration. The last two columns show the standard deviation and the coefficient of variation for this dataset. Its distribution can be also visualised in the histogram shown by Figure~\ref{fig:hist1}. Clearly, data indicates a separation around 6000, which serves as a clue for the adoption of a mixture model as we theoretically derived above. Next, we explain the estimation procedure.
\begin{table}[b]
	\center
	\begin{tabular}{cccccccc}
		\hline\hline
		{\bf Min.} & {\bf 1st Qu.} & {\bf Median} & {\bf Mean} & {\bf 3rd Qu.} & {\bf Max.} & {\bf S.D.} & {\bf C.V.}\\
		3685 & 4879 & 5373 & 5552 & 6432 & 7316 & 902.844 & 0.1626\\
		\hline
	\end{tabular}
	\caption{Summary of statistics}
	\label{t:ctstats}
\end{table}
\begin{figure}
	\center
	\includegraphics[scale=0.3]{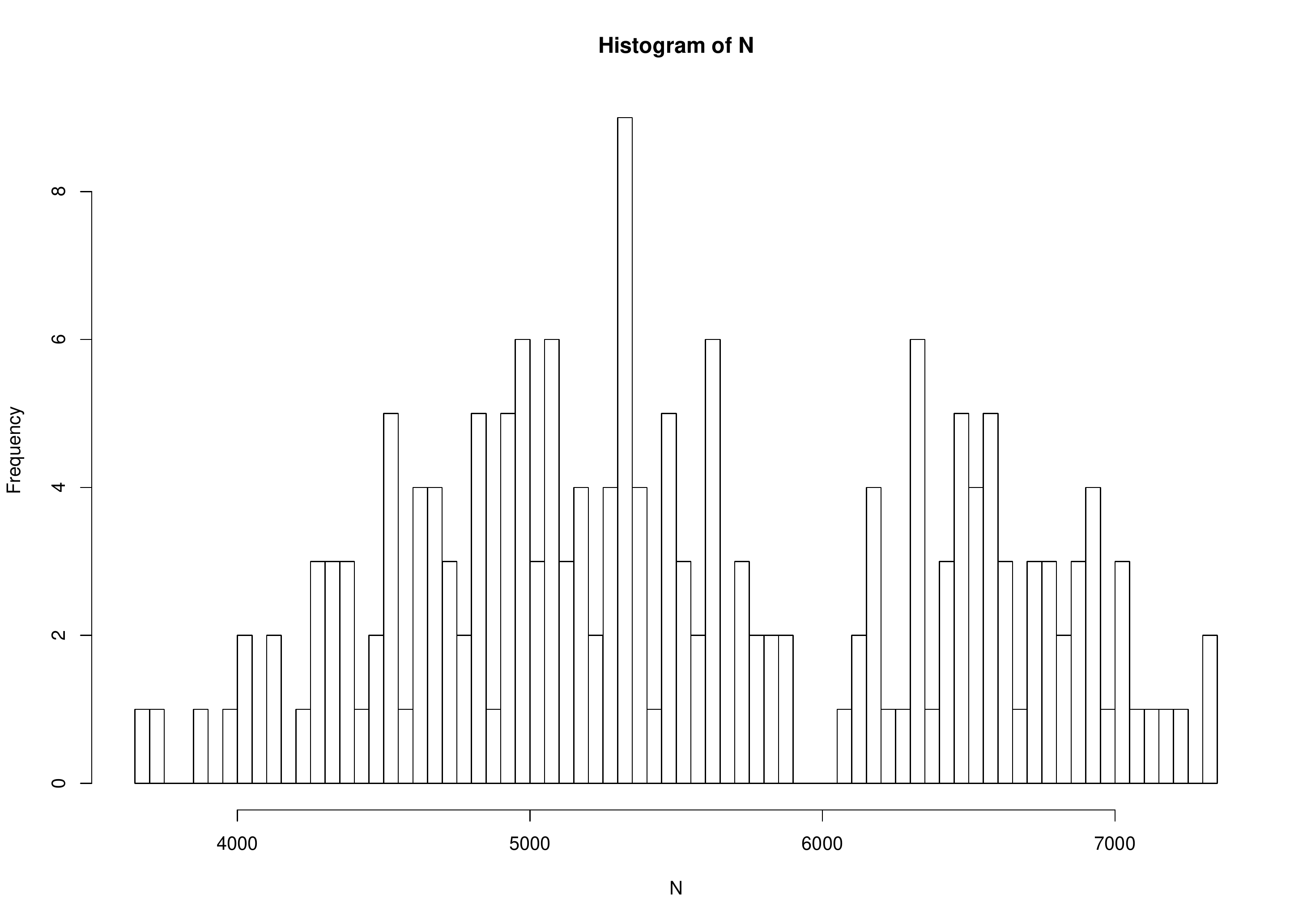}
	\caption{Histogram of claim counts}
	\label{fig:hist1}
\end{figure}
\subsection{Parameter estimation via the EM Algorithm}
\subsubsection*{Claim frequency component}
\label{sec:emfreq}
Recall that we work with a mixture model and that we only observe a global $N_i \in \mathbb{N}$, we mean, we do not know which negative binomial distribution it comes from. Here, we introduce a latent variable, denoted as $Z$, representing the missing information regarding the provenience of each observation. Since we only have  two mixed distributions, $Z\frown Bernoulli(p)$ is a Bernoulli random variable with $\mathbb{P}\{Z = 1\} = q=1-\mathbb{P}\{Z = 0\}$. Now, the complete information is given by the vector $\{N, Z\}$. We can write the complete likelihood function as follows.
\begin{eqnarray*}
L(\theta|N, Z)	&=& \prod_{i=1}^{m} \left[p\binom{n_i+\alpha_1-1}{n_i}\left(\frac{\beta}{\beta +1}\right)^{\alpha_1}\left(\frac{1}{\beta + 1}\right)^{n_i} \right]^{z_i}\\
	& & \hspace{0.5cm}
	 \times\left[(1-p)\binom{n_i+\alpha_1+\alpha_2-1}{n_i}\left(\frac{\beta}{\beta + 1}\right)^{\alpha_1+\alpha_2}\left(\frac{1}{\beta + 1}\right)^{n_i} \right]^{1-z_i}\, ,
\end{eqnarray*}
and according to the law of total probability, we have, 
$$\mathbb{P}\{N\} = \mathbb{P}\{N | Z = 1\}\mathbb{P}\{Z = 1\} + \mathbb{P}\{N | Z = 0\}\mathbb{P}\{Z = 0\}.$$
Here, $\theta \in \{\alpha_1, \alpha_2, \beta, p\}$ are the parameters we are interested in estimating for the claim count distribution. Correspondingly, the log-likelihood function is given by, from above,
\begin{eqnarray}
\mathcal{L}(\theta|N, Z) &=& \sum_i z_i \log\left[p\binom{n_i+\alpha_1-1}{n_i}\left(\frac{\beta}{\beta + 1}\right)^{\alpha_1}\left(\frac{1}{\beta + 1}\right)^{n_i} \right]\nonumber\\
&& \hspace{0.35cm}
+\sum_i (1-z_i)\log \left[(1-p)\binom{n_i+\alpha_1+\alpha_2-1}{n_i}\left(\frac{\beta}{\beta + 1}\right)^{\alpha_1+\alpha_2}\left(\frac{1}{\beta + 1}\right)^{n_i} \right].
\label{eqn:loglikelihood}
\end{eqnarray}
%

We summarize how the algorithm works. It is an iterative process where each iteration consists of two steps, the E-step and the M-step, standing for Expectation and Maximisation, respectively.
\begin{enumerate}
	\item We begin with an initially determined parameter values $\theta^{(0)} = \{\alpha_1^{(0)}, \alpha_2^{(0)}, \beta^{(0)}, p^{(0)}\}$.
	\item E-Step 
	
	For the $(l+1)$-{th} iteration, $l = 0, 1, \dots$ , we first seek for the expected value of $Z_i$ conditional on the observations together with the current parameter estimates $\theta^{(l)}$, i.e., estimates from the previous $l$-{th} iteration. This is denoted as $ \tau_i $,
	\begin{eqnarray*}
		\tau_i &=& \mathbb{E} [Z_i| N,\theta^{(l)}] = 1 \times \mathbb{P}\{Z_i = 1|N_i,\theta^{(l)}\} + 0 \times \mathbb{P}\{Z_i = 0|N_i,\theta^{(l)}\} = \frac{\mathbb{P}\{Z_i = 1, N_i=n_i|\theta^{(l)}\}}{\mathbb{P}\{N_i| \theta^{(l)}\}}\\[0.15cm]
		&=&\frac{p^{(l)}\binom{n_i+\alpha^{(l)}_1-1}{n_i}\left(\frac{\beta^{(l)}}{\beta^{(l)} + 1}\right)^{\alpha^{(l)}_1}\left(\frac{1}{\beta^{(l)} + 1}\right)^{n_i}}{p^{(l)}\binom{n_i+\alpha^{(l)}_1-1}{n_i}\left(\frac{\beta^{(l)}}{\beta^{(l)} + 1}\right)^{\alpha^{(l)}_1}\left(\frac{1}{\beta^{(l)} + 1}\right)^{n_i} + (1-p^{(l)})\binom{n_i+\alpha^{(l)}_1+\alpha^{(l)}_2-1}{n_i}\left(\frac{\beta^{(l)}}{\beta^{(l)} + 1}\right)^{\alpha^{(l)}_1+\alpha^{(l)}_2}\left(\frac{1}{\beta^{(l)} + 1}\right)^{n_i}}
	\end{eqnarray*}
	
	Subsequently, based on  Expression~\eqref{eqn:loglikelihood}, we compute the expectation of the log-likelihood function with respect to the conditional distribution of $Z$, given $N$ under the current estimates $\theta^{(l)}$, denoted as $ Q(\theta |\theta^{(l)}) $. Considering we have we have $m$ independent observations, we have
	\small{\begin{eqnarray}
		&& Q(\theta |\theta^{(l)}) = \mathbb{E}_{Z_i | N_i, \theta^{(l)}}[\mathcal{L}(\theta | N_i,Z_i)]\label{eqn:expectation}\\[0.15cm]
		&=&\sum_{i=1}^{m}\tau_i \left[\log p^{(l)} + \log\binom{n_i+\alpha_1^{(l)}-1}{n_i}\ + \alpha_1^{(l)}\log \frac{\beta^{(l)}}{\beta^{(l)} + 1} + n_i \log \frac{1}{\beta^{(l)} + 1}\right] \nonumber\\
		&& + \sum_{i=1}^{m} (1-\tau_i) \left[\log (1- p^{(l)}) + \log\binom{n_i+\alpha_1^{(l)} + \alpha_2^{(l)} -1}{n_i} + (\alpha_1^{(l)}+\alpha_2^{(l)})\log \frac{\beta^{(l)}}{\beta^{(l)} + 1} + n_i \log \frac{1}{\beta^{(l)} + 1}\right]\nonumber\\[0.15cm]
		&=& \log p^{(l)} \sum_{i}\tau_i + \sum_i\tau_i \log\binom{n_i+\alpha_1^{(l)} -1}{n_i} + \alpha_1^{(l)} m\log \frac{\beta^{(l)}}{\beta^{(l)} + 1}+ \log \frac{1}{\beta^{(l)} + 1}\sum_i n_i\nonumber\\
		&& + \log (1- p^{(l)}) \sum_{i} (1-\tau_i) + \sum_{i} (1-\tau_i)\log\binom{n_i+\alpha_1^{(l)} + \alpha_2^{(l)} -1}{n_i} +\alpha_2^{(l)}\log \frac{\beta^{(l)}}{\beta^{(l)} + 1}\sum_{i} (1-\tau_i).\nonumber
		\end{eqnarray}}
	\item M-Step\\
	The maximisation step is set to find the parameter values that maximises function  \eqref{eqn:expectation} and they become the estimates to be used in the next iteration, i.e.,
	\begin{equation}
	\theta^{(l+1)} =\operatornamewithlimits{argmax}_{\theta} Q(\theta |\theta^{(l)})
	\label{eqn:max}
	\end{equation}
	For this, we take the gradient of $Q(\theta |\theta^{(l)})$, equate to zero and solve for $\{\alpha_1, \alpha_2, \beta, p\}$ simultaneously.
	\begin{eqnarray}
	\frac{\partial Q}{\partial \alpha_1} &=& \sum_i \tau_i [\digamma(n_i+\alpha_1) - \digamma(\alpha_1)] + \sum_i(1-\tau_i)[\digamma(n_i+\alpha_1+\alpha_2) - \digamma(\alpha_1 + \alpha_2)] + m\log\frac{\beta}{1+\beta}=0;\nonumber\\
	\frac{\partial Q}{\partial \alpha_2} &=& \sum_i(1-\tau_i)\left[\digamma(n_i+\alpha_1+\alpha_2) - \digamma(\alpha_1 + \alpha_2)] + \log\frac{\beta}{1+\beta}\right]=0;\nonumber\\
	\frac{\partial Q}{\partial \beta} &=& \frac{\alpha_1 m + \alpha_2\sum_i(1-\tau_i) - \beta\sum_i n_i}{\beta(1+\beta)}=0;\nonumber\\
	\frac{\partial Q}{\partial p} &=& \frac{\sum_i\tau_i}{p} - \frac{\sum_i(1-\tau_i)}{1-p}=0,\label{eqn:partialp}
	\end{eqnarray}
	where $\digamma(\cdot)$ is the digamma function denoting the logarithmic derivative of a gamma function, i.e.,
	\[
	\digamma(x) = \frac{d}{d x}\log(\Gamma(x)) = \frac{\Gamma'(x)}{\Gamma(x)}.
	\]
	The estimates for the current iteration, i.e., the $(l+1)$-{th}, will be the solutions of the above equations. Note that Equation \eqref{eqn:partialp} depends only on parameter $p$, thus it can be solved directly with explicit representation
	\[
	p^{(l+1)} = \frac{\sum_i \tau_i}{m}.
	\]
	For the other three parameters, we provide numerical results. We employ the {\it nleqslv} R package, which  solves systems of non-linear equations. As a consequence, we can obtain estimated parameters at this iteration $\theta^{(l+1)} = \{\alpha_1^{(l+1)}, \alpha_2^{(l+1)}, \beta^{(l+1)}, p^{(l+1)}\}$.
	\item Plug $\theta^{(l+1)}$ into the $(l+2)$-{th} iteration and repeat the E-M steps until convergence.
\end{enumerate}

Initiating values using the method of moments yields $\alpha_1^{(0)} = 96.14042;\, \alpha_2^{(0)} = 31.54888; \, \beta^{(0)} = 0.01927362;\,  p^{(0)}= 0.6555556$, we can implement the EM algorithm on the chosen dataset mentioned earlier. We note that at a tolerance level of $0.001$, it converge within 75 iterations and the resulting estimates for $\alpha_1, \alpha_2, \beta, p$ are
\[
\alpha_1 = 97.55820446;\, \alpha_2 = 30.14706672;\, \beta = 0.01978072;\,  p = 0.5929959.
\]
Based on these parameters, we add theoretical density to the histogram as shown in Figure \ref{fig:hist}. Visually it appears to be a good fit.
\begin{figure}
	\center
	\includegraphics[scale=0.3]{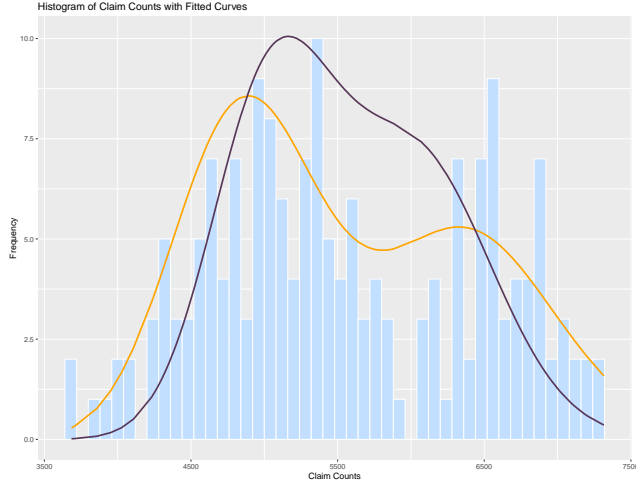}
	\caption{Histogram with Fitted Density}
	\label{fig:hist}
\end{figure}
Note that, however, this result stemmed from initial values being derived from moments. In addition, we found that estimates vary much with different chosen initial values. For instance, if we begin with $\alpha_1^{(0)} = 150; \alpha_2^{(0)}  = 20; \beta^{(0)}  = 0.02,$ convergence happens only within the $308^{th}$ iteration, at the same tolerance level as above. Nevertheless, the corresponding estimates lie closer to the initial values.
\[
\alpha_1 = 141.98069243;  \alpha_2 = 28.02122426; \beta = 0.02773337, p = 0.5723845.
\]
As can be seen in Figure \ref{fig:hist}, we could already tell that most likely it does not provide a better fit than the previous one.

On the other hand, the Kolmogorov-Smirnov, Anderson-Darling and Cram\'{e}r-Von Mises test results are summarised in Table~\ref{table:tests}. 
\begin{table}[t]
	\centering
	\begin{tabular}{|c|c|c|c|}
		\hline
		\thead{Reference Distribution} & \thead{Kolmogorov-Smirnov} & \thead{Anderson-Darling} & \thead{Cram\'{e}r-Von Mises}\\ \hline
		\makecell{NB1(97.558, 0.0198) \\and NB2(127.705, 0.0198)} & 0.6849 & 0.6316 & 0.682 \\ \hline
		\makecell{NB1(141.981, 0.0277) \\and NB2(170.002, 0.0277)} & 0.00162 & 4.347e-06 & 0.001474\\ \hline
	\end{tabular}
\caption{Summary of goodness-of-fit tests}
	\label{table:tests}
\end{table}
In the former case, i.e., when initial values were obtained via method of moments, all $ p $-values are big. That is to say, we do not have enough evidence to reject the null hypothesis that the observed data agree with the proposed model. On the contrary, none of the tests present $ p $-values greater than 0.01 for the second example, i.e., an arbitrarily chosen set of starting values. Therefore, we could conclude the second model does not fit the data well even at 1\% significance level.
\subsubsection*{Claim severity component}
As we said before, we do not have the claim severity data corresponding to the claim counts, so we first simulated them based on the claim counts used earlier. Then we have for claim counts in each period followed by a sequence of claim sizes randomly generated according to \eqref{eqn:clmdist} when $f(\cdot)$ and $g(\cdot)$ take $Exponential(\mu)$ and $Pareto(\delta, \sigma)$ forms, respectively. 

With varying values of the shape parameter in the Pareto part, we considered two separate scenarios when generating claim size data:
\begin{enumerate}[label={(\arabic*)}]
	\item Shape parameter $\delta > 1$, i.e., finite mean;
	\item Shape parameter $\delta \leq 1$, i.e., infinite mean.
\end{enumerate}
For illustration purposes, we adopted $\delta = 2$ and $\delta = 0.3$, respectively, for claim costs simulations. It is well-known that the Pareto distribution in the first scenario has finite mean, whereas it has infinite mean in the second scenario. 

Before proceeding to the EM steps, let us look into the parameter $\nu$. In fact, $\nu$ is connected to the parameters in the claim frequency component. Therefore, once the sample of claim counts is given, we would get an estimated value for $\nu$. Recall from previous sections that,
\begin{equation}
\nu = p + (1-p)\cdot\frac{B(\alpha_1+1, \alpha_2)}{B(\alpha_1, \alpha_2)}\, .
\label{eq:nu}
\end{equation}
If we substitute in \eqref{eq:nu} the estimated values of $p,\, \alpha_1,\, \alpha_2$, we could obtain that the point estimation for $\nu$ would be $\nu = 0.9039196$. $\nu$ could then be treated as known for now so that we can use it in the claim costs simulation. 

Thus, we are ready to implement the steps of EM algorithm for estimating parameters in the claim severity model. 
\begin{enumerate}
	\item We begin with an initially determined parameter values $\vartheta^{(0)} = \{\mu^{(0)}, \delta^{(0)}, \sigma^{(0)}\}$.
	\item E-Step 
	
	For the $(l+1)$-{th} iteration, $l = 0, 1, \ldots$,  we first seek for the expected value of the random value $U_i$, representing the latent variable, conditional on the observations together with the current parameter estimates $\vartheta^{(l)}$, i.e., estimates from the $l$-{th} iteration. Let $ \tau_i $ denote this expectation
	\begin{eqnarray*}
		\tau_i &=& \mathbb{E} [U_i| { Y,\vartheta^{(l)}}] = 1\times \mathbb{P}\{U_i = 1 | Y_i,\vartheta^{(l)}\} + 0 \times \mathbb{P}\{U_i = 0 | Y_i,\vartheta^{(l)}\} \\[0.15cm]
		&=&  \frac{\mathbb{P}\{U_i = 1, Y_i|\vartheta^{(l)}\}}{\mathbb{P}\{Y_i, \vartheta^{(l)}\}}
		=
		\frac{\nu \mu^{(l)} e^{-\mu^{(l)} y_i}}{\nu \mu^{(l)} e^{-\mu^{(l)} y_i} + (1-\nu)\frac{\delta^{(l)} {\sigma^{(l)}}^{\delta^{(l)}}}{(\sigma^{(l)}+y_i)^{\delta^{(l)}+1}}}
	\end{eqnarray*}
	Subsequently, we compute the expectation of its log-likelihood function with respect to the conditional distribution of $U$ given $Y$ under the current estimates $\vartheta^{(l)}$. Considering that we have $m^\ast$ independent observations, we have,
	\begin{eqnarray}
	Q(\vartheta |\vartheta^{(l)}) &= & \mathbb{E}_{U_i | Y_i, \vartheta^{(l)}}[\mathcal{L}(\vartheta | Y_i, U_i)]\label{eqn:clm_expectation}\\
	&=&\sum_{i}^{m^\ast}\tau_i \left[\log \nu + \log \mu^{(l)} -\mu^{(l)} y_i\right]\nonumber\\
	&& \hspace{0.25cm}+ \sum_{i}^{m^\ast} (1-\tau_i) \left[\log (1- \nu) + \log \delta^{(l)} +\delta^{(l)}\log \sigma^{(l)}-(\delta^{(l)}+1) \log(\sigma^{(l)}+y_i)\right]\nonumber\\[1.5mm]
	&=& \log \nu \sum_{i}\tau_i + \log\mu^{(l)} \sum_i \tau_i -\mu^{(l)} \sum_i \tau_i y_i  + \log (1- \nu) \sum_{i} (1-\tau_i) \nonumber\\
	&& + \log\delta^{(l)} \sum_{i} (1-\tau_i) +\delta^{(l)} \log \sigma^{(l)} \sum_{i} (1-\tau_i) - (\delta^{(l)}+1) \sum_{i} (1-\tau_i) \log(\sigma^{(l)}+ y_i).\nonumber
	\end{eqnarray}
	\item M-Step
	
	The maximisation step is set to find the parameter values that maximises \eqref{eqn:clm_expectation}, i.e.,
	\begin{equation}
	\vartheta^{(l+1)} =\operatornamewithlimits{argmax}_{\vartheta} Q(\vartheta |\vartheta^{(l)})
	\label{eqn:clm_max}
	\end{equation}
	In order to do this, we take the gradient of $Q(\vartheta |\vartheta^{(l)})$, equate to zero and solve the system for $\{\mu, \delta, \sigma\}$ simultaneously.
	\begin{eqnarray}
	\frac{\partial Q}{\partial \mu} &=& \frac{\sum_i \tau_i}{\mu}-\sum_i \tau_i y_i =0;\nonumber\\
	\frac{\partial Q}{\partial \delta} &=& \frac{m^\ast-\sum_i\tau_i}{\delta} + m^\ast\log\sigma -\log \sigma\sum_i \tau_i -\sum_i (1-\tau_i) \log(\sigma+y_i) =0 ;\nonumber\\
	\frac{\partial Q}{\partial \sigma} &=& \frac{(m^\ast-\sum_i\tau_i)\delta}{\sigma} - (\delta+1)\sum_i \frac{1-\tau_i}{\sigma+y_i}=0 \, . \label{eqn:clm_partialp}\\
	\frac{\partial Q}{\partial \nu} &=& \frac{\sum_i\tau_i}{\nu} - \frac{\sum_i(1-\tau_i)}{1-\nu}.\nonumber
	\end{eqnarray}

	First note that implementing the EM algorithm to find the estimate $\hat{\nu}$ is very similar to that applied to $p$ in Subsection~\ref{sec:emfreq}. Recall that the partial derivative equation with respect to $p$ in that subsection is independent from all other variables. A similar situation exists here for $\nu$. Note that it does not affect the remainder of the equations. As so, we can find
	\begin{equation}
	\hat{\nu}^{(l+1)} = \frac{\sum_i \tau_i}{m^{\ast}}.
	\end{equation}

The estimates for the subsequent iteration $(l+1)$-{th} are the solutions to the above equations.\\
	It is also obvious that $\mu$ is independent from other parameters and could be solved directly with an explicit representation
	\[
	\mu^{(l+1)} = \frac{\sum_i \tau_i}{\sum_i \tau_i y_i}.
	\]
	For the other two parameters, we could only solve numerically. Again we apply the {\it nleqslv} package in R. As a consequence, we can derive estimated parameters at this iteration $\vartheta^{(l+1)} = \{\mu^{(l+1)}, \delta^{(l+1)}, \sigma^{(l+1)}\}$.
	
	\item Plug $\vartheta^{(l+1)}$ into the $(l+2)$-{th} iteration and repeat the above steps until convergence.
\end{enumerate}

Then, we implemented the algorithm and were able to achieve the desired parameters within reasonable amount of iterations. In Table~\ref{tab:est1} and Table~\ref{tab:est2} we show our estimates against the predefined parameters in the simulation scenario with $\delta > 1$ and $\delta \leq 1$, respectively. This verifies the effectiveness of our algorithm. 

\begin{table}[h]
	\centering
	\begin{tabular}{|l|c|c|c|c|}
		\hline
		{\bf Parameters} & $\mu$ & $\delta$ & $\sigma$ & $\nu$\\ \hline\hline
		Predefined & 1 & 2 & 1 & 0.9039196\\ \hline
		Initial value & 1.5 & 2.5 & 0.5 & 0.9\\ \hline
		After 2467 iterations (at convergence) & 0.9925845 & 2.219456 & 1.159886 & 0.8343595\\
		\hline
	\end{tabular}
	\caption{Performance of estimation using EM algorithm on simulated claim severities ($\delta > 1$)}
	\label{tab:est1}
\end{table}
\begin{table}[h]
	\centering
	\begin{tabular}{|l|c|c|c|c|}
		\hline
		{\bf Parameters} & $\mu$ & $\delta$ & $\sigma$ & $\nu$\\ \hline\hline
		Predefined & 1 & 0.3 & 0.5 & 0.9039196\\ \hline
		Initial value & 1.5 & 0.5 & 0.2 & 0.9\\ \hline
		After 1330 iterations (at convergence) & 1.001311 & 0.2973774 & 0.5146515 & 0.9050024\\
		\hline
	\end{tabular}
	\caption{Performance of estimation using EM algorithm on simulated claim severities ($\delta \leq 1$)}
	\label{tab:est2}
\end{table}
The tables have demonstrated satisfactory results using the EM algorithm for mixed distribution of the claim severity component. We would like to point out, however, that the algorithm is highly sensitive to the chosen initial values. Some starting values could even lead to no solutions to the non-linear equations to be solved in the algorithm. But since we knew the true parameter values here, we were able to begin with a relatively reasonable initial 'guess' of the parameter values. In the case of applying it to real data, similar to estimating the claim frequency component explained in the previous section, the method of moments could be adopted in the beginning. Figure~\ref{fig:est} shows how the parameter estimates evolved over iterations in the EM algorithm.
\begin{figure}
	\center
	\includegraphics[scale=0.45]{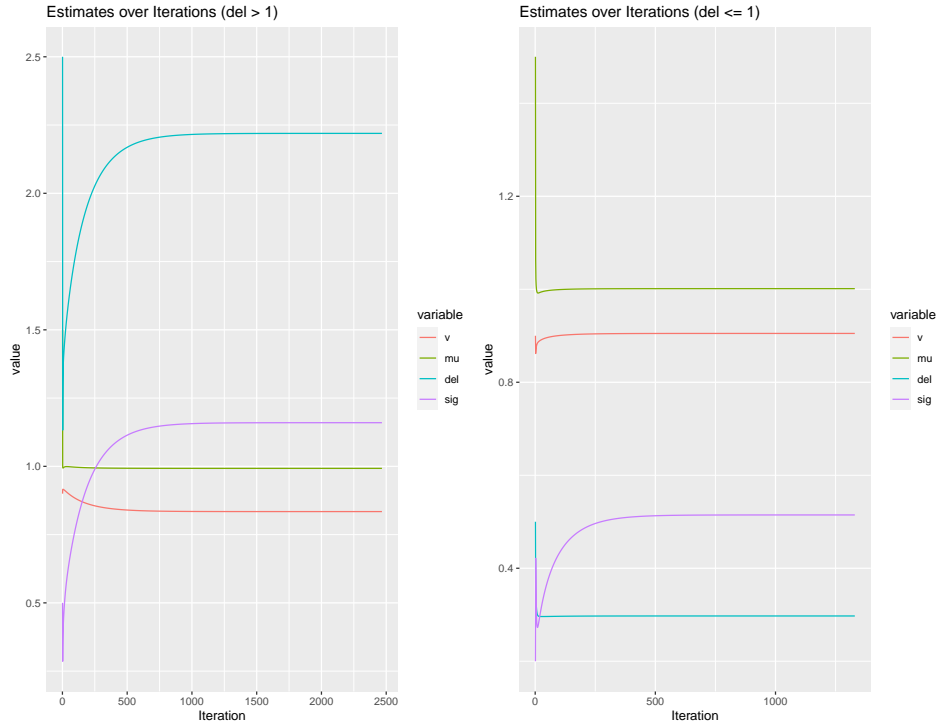}
	\caption{Parameter Estimates under EM Algorithm}
	\label{fig:est}
\end{figure}

Similarly, we can visualize how the distribution is fitted to the claims from the histogram with the fitted density. On the top plot in Figure~\ref{fig:fitclaims1}, we not only show the fit of the mixed distribution, but also the fits using an Exponential as well as a Pareto distribution to the simulated data without extreme claims, i.e., $\delta = 2$. 

However, the goodness-of-fit is not what we would like to address here, because the claim costs data were simulated. We are more interested in the tail behaviours using different models. The tail behaviour graphs are plotted under log-log scale in both Figure~\ref{fig:fitclaims1} and Figure~\ref{fig:fitclaims2}, from which we can clearly see that our mixed model provides the heaviest fitted tails regardless whether the claim data contain extreme values or not. It implies that the use of mixed distribution to fit the claims leads to a better capture of information in the tails. In other words, the mixed distribution is more sensitive in detecting large claims, which reflects our motivation for this work.

\begin{figure}
	\center
	\includegraphics[scale=0.45]{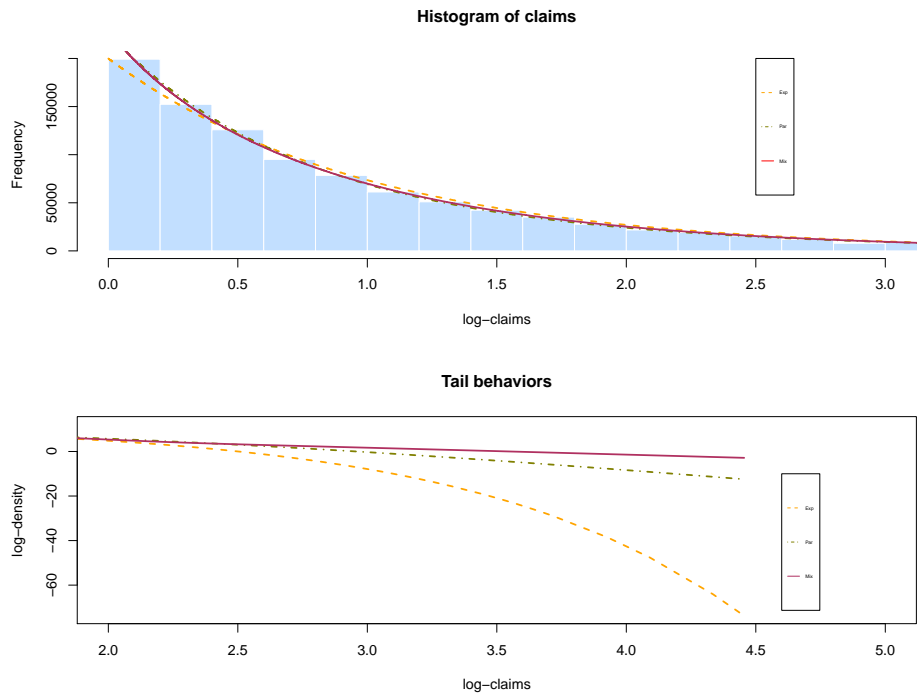}
	\caption{Histogram and Tail Behaviours ($\delta > 1$)}
	\label{fig:fitclaims1}
\end{figure}
\begin{figure}
	\center
	\includegraphics[scale=0.3]{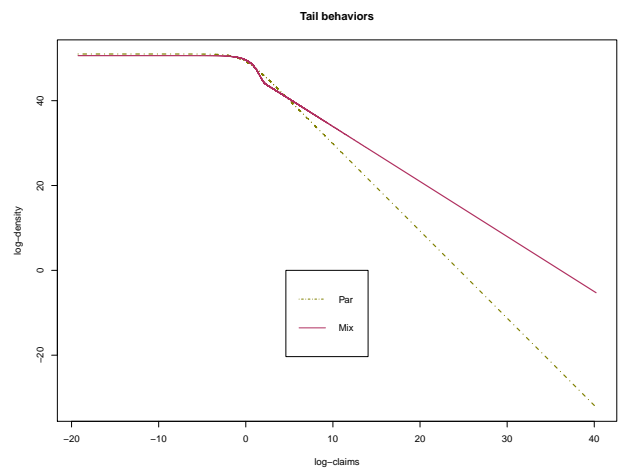}
	\caption{Tail Behaviours ($\delta \leq 1$)}
	\label{fig:fitclaims2}
\end{figure}
\subsection{Premium Calculation}
With the estimated parameters, we can then calculate the Bayesian premiums according to the theoretical results discussed in Section \ref{sec:Bayesian}. To clarify, by Bayesian premium, we refer to the mean of the posterior distributions. Under our setting, the posterior for both the frequency and severity are mixed distributions. By choosing mixed conjugate priors that are in the Exponential family, we were able to show that the posterior distribution is also a mixture with updated parameters which are dynamically adjusted by incoming claims. 
As a reminder, we proposed the use of a mixture of two Negative Binomial distributions for the frequency, and a mixture of an Exponential and a Pareto distribution for the severity. To illustrate our proposed model, we also calculated Bayesian premiums (posterior means) using other common models, i.e., when claim counts are fitted by Negative Binomial, whereas claim severities are fitted by Exponential or Pareto, as we did in the severity data fitting. All models under comparison are listed in Table~\ref{tab:models}. 
\begin{table}[h]
	\centering
	\begin{tabular}{|c|c|c|c|}
		\hline
		{\bf Scenario} & {\bf Frequency} & {\bf Severity} & {\bf Model} \\\hline\hline
		\multirow{3}{*}{1} & Negative Binomial & Exponential & 1a\\\cline{2-4}
		& Negative Binomial & Pareto ($\delta > 1$) & 1b\\\cline{2-4}
		& Mixed Negative Binomial & Mixed Exponential and Pareto ($\delta > 1$) & 1c\\\hline
		\multirow{2}{*}{2} & Negative Binomial & Pareto ($\delta \leq 1$) & 2a \\\cline{2-4}
		& Mixed Negative Binomial & Mixed Exponential and Pareto ($\delta \leq 1$) & 2b\\\hline
	\end{tabular}
	\caption{Parametric models used for premium calculations}
	\label{tab:models}
\end{table}

We can thus compute all their corresponding posterior means for each period based on the observed claim counts and simulated costs over time. Table~\ref{tab:premLight} and Table~\ref{tab:premHeavy} display the posterior means for the first 10 periods under the two separate cases distinguished by the values of the parameter $\delta$. Note that when the shape parameter $\delta \leq 1$ originating from the Pareto distribution, the claims can explode to infinity. Hence, very large values can be seen from Table~\ref{tab:premHeavy}, and the claim counts are omitted because the posterior means are mostly driven by the claim costs. In addition, fitting an Exponential distribution to these claims would result in the parameter being virtually zero. That is also why there is no 'Exponential' model in Table~\ref{tab:models} or 'Exp' column in Table~\ref{tab:premHeavy} either.

\begin{table}[h]
	\centering
	\begin{tabular}{|c|c|c|c|c|c|}
		\hline
		{\bf Period} & {\bf Exp (1a)} & {\bf Par (1b)} & {\bf Mix (1c)} & {\bf Counts} & {\bf Costs}\\\hline\hline
		1 & 4967.668 & 5021.675 & 5017.420 & 4964 & 5017.763\\\hline
		2 & 4684.657 & 4708.717 & 4708.197 & 4400 & 4393.788\\\hline
		3 & 4632.124 & 4647.323 & 4647.190 & 4527 & 4524.171\\\hline
		4 & 4646.491 & 4667.480 & 4667.337 & 4690 & 4728.054\\\hline
		5 & 4649.527 & 4660.096 & 4659.971 & 4662 & 4630.573\\\hline
		6 & 4612.599 & 4640.139 & 4640.116 & 4428 & 4540.106\\\hline
		7 & 4652.571 & 4683.893 & 4683.802 & 4893 & 4946.695\\\hline
		8 & 4672.193 & 4711.397 & 4711.288 & 4810 & 4904.065\\\hline
		9 & 4626.397 & 4659.045 & 4659.011 & 4260 & 4239.913\\\hline
		10 & 4566.472 & 4600.162 & 4600.208 & 4027 & 4069.803\\\hline
	\end{tabular}
	\caption{Posterior Means under different models ($\delta > 1$)}
	\label{tab:premLight}
\end{table}
\begin{table}[h]
	\centering
	\begin{tabular}{|c|c|c|c|c|}
		\hline
		{\bf Period} & {\bf Par (2a)} & {\bf Mix (2b)} & {\bf Costs}\\\hline\hline
		1 & 3.258448e+09 & 3.256148e+09 & 3.255862e+09\\\hline
		2 & 1.717028e+14 & 1.716975e+14 & 3.431870e+14\\\hline
		3 & 1.145418e+14 & 1.145447e+14 & 2.808137e+11\\\hline
		4 & 8.589659e+13 & 8.589740e+13 & 1.786957e+09\\\hline
		5 & 6.871266e+13 & 6.871307e+13 & 4.558560e+07\\\hline
		6 & 5.726841e+13 & 5.726965e+13 & 5.845177e+10\\\hline
		7 & 4.908520e+13 & 4.908535e+13 & 7.294295e+08\\\hline
		8 & 4.294886e+13 & 4.294869e+13 & 3.949388e+09\\\hline
		9 & 3.817744e+13 & 3.817783e+13 & 9.165001e+09\\\hline
		10 & 3.457031e+13 & 3.457120e+13 & 2.107256e+12\\\hline
	\end{tabular}
	\caption{Posterior Means under different models ($\delta \leq 1$)}
	\label{tab:premHeavy}
\end{table}

For a better understanding of the differences among different models, we present in graphs the changes of premiums overtime. Figure~\ref{fig:premLight} demonstrates the first scenario when there are no extreme claims. Figure~\ref{fig:premHeavy} is the graphical representation of the quantities in Table~\ref{tab:premHeavy}.
\begin{figure}
	\center
	\includegraphics[scale=0.45]{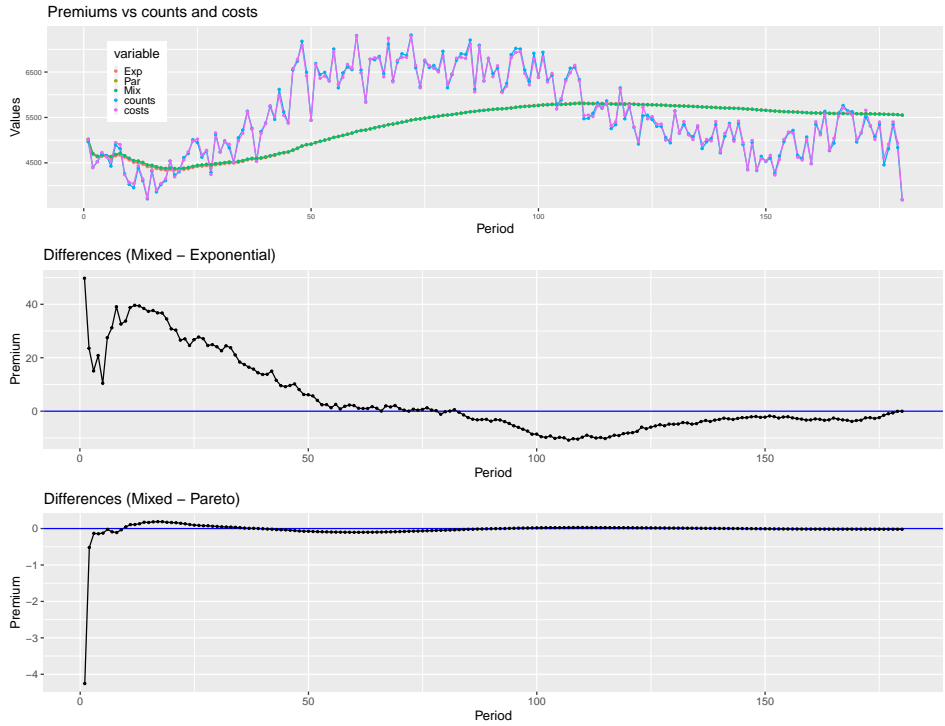}
	\caption{Bayesian Premiums ($\delta > 1$)}
	\label{fig:premLight}
\end{figure}
In Figure~\ref{fig:premLight}, the top graph contains claim counts and costs per period as well as the posterior means obtained from Model 1a, Model 1b, and Model 1c, respectively. In general, the differences among the three models are not obvious in the first plot. So the bottom two figures show the exact differences between the pairs we are interested in, i.e., 1a vs 1c and 1b vs 1c. 

Comparing our mixed model (1c) with the Exponential only one (1a), we can see significant discrepancies over time. Matching them with the trend of the claim counts/costs, we can conclude that the mixed model provides higher premiums when there is an increasing trend in claims (e.g. from period 10 to 50), while it immediately drops below the premium suggested by the Exponential model as soon as a decreasing trend is observed (e.g. from period 100 to 150). That has reflected a quick response in the premium adjustment of the mixed model. 

There are no such evident difference between the mixed model (1c) and the Pareto only model (1b) overall, as can be seen in the graph at the bottom. Nevertheless, the mixed model gives slightly higher premiums when there is a surge in the claims around period 10. Another main difference is that the mixed model serves lower premiums at the very beginning compared to the Pareto model. It matches with our original intention that no punishment should be there for no claims.

To sum up, these results have shown support for our proposal of the mixed model. On the one hand, we would like to keep the promise of not increasing premium charges until we have evidence of risks. On the other hand, the premium adjustment to actual evidence is sensitive and spontaneous.
\begin{figure}
	\center
	\includegraphics[scale=0.45]{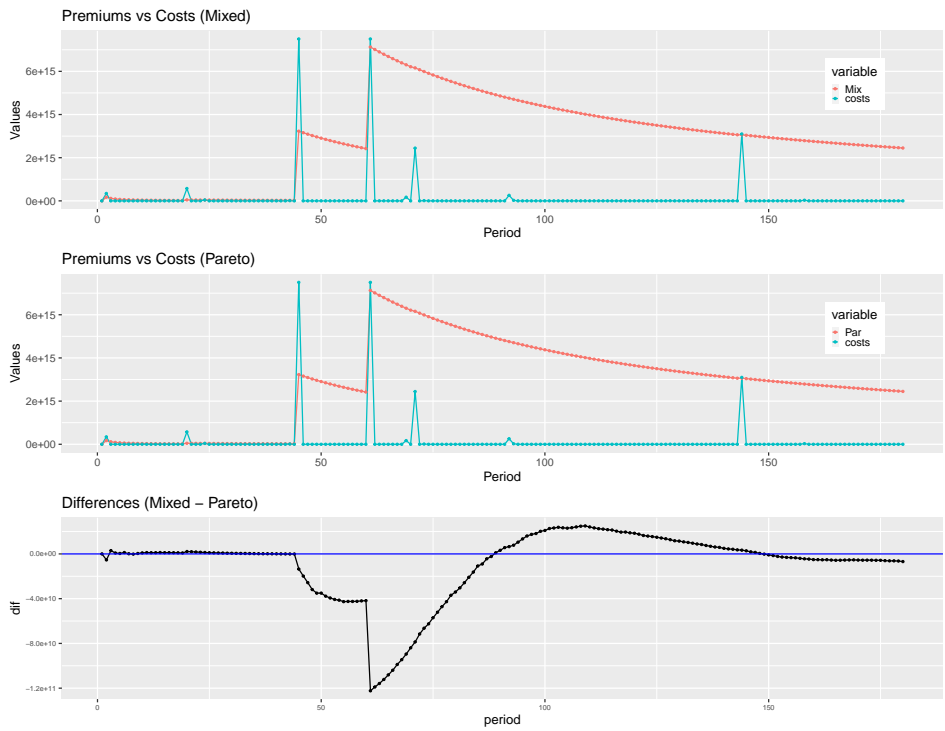}
	\caption{Bayesian Premiums ($\delta \leq 1$)}
	\label{fig:premHeavy}
\end{figure}
Figure~\ref{fig:premHeavy} exhibits the premiums (posterior means) obtained from Model 2a and 2b, respectively. They have demonstrated very similar trends overall according to the claim costs behaviours. When there is a big jump in the claim severity, e.g. around period 47, both models are able to provide a surge in the premiums instantaneously. However, due to the large scale of claim severities, we could not see clearly the difference between the two models from the top two graphs. The third plot was thus generated at the bottom. As can be seen here, the mixed model in this case actually advocates lower premiums when the claim costs are extreme, compared to the Pareto model. It could be an effect stemming from the mixture with the Exponential distribution. However, from the practical point of view, the premiums would probably be capped at a level anyway when the scale is that large. 

We have been discussing the mean values only. Since we have derived the posterior distributions, it would be interesting to visualize other statistics, such as the inter-percentile ranges (IPR). Note that the IPR is a reflection of the variance of the posterior distribution, and is also referred to as the $(1-\alpha)\%$ credibility interval under the Bayesian context. We will be using both terminologies interchangeably in the sequel. Here we have sketched the 90\% credibility interval (5-95 IPR) for the first scenario, i.e., for Model 1a, 1b and 1c in Figure~\ref{fig:IPR}. That includes the 90\% possible values in the center for the risks. 
\begin{figure}
	\center
	\includegraphics[scale=0.3]{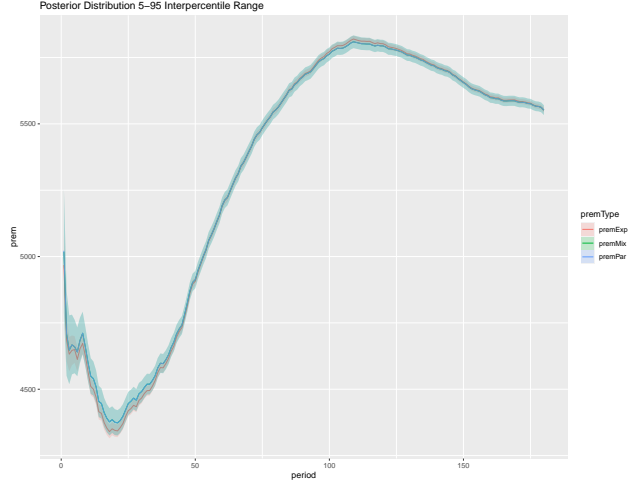}
	\caption{90\% Credibility Interval ($\delta > 1$)}
	\label{fig:IPR}
\end{figure}
As shown in the graph, all three models have premium values lie very close to each other, with Exponential (red) to be more distinct from the other two. Overall, the credibility interval differ more at the very beginning possibly due to insufficient data collected. Exponential has essentially the narrowest credibility interval amongst the three models. For comparison between the actual values of the 90\% credibility interval between model 1b and 1c, see Figure~\ref{fig:LVar}. Except for a significantly small credibility interval at the very beginning, they have exhibited no difference in the variation. This has, however, offered us with one more reason to choose the mixed model: fewer uncertainties are preferred when assessing risks. 

\begin{figure}
	\center
	\includegraphics[scale=0.4]{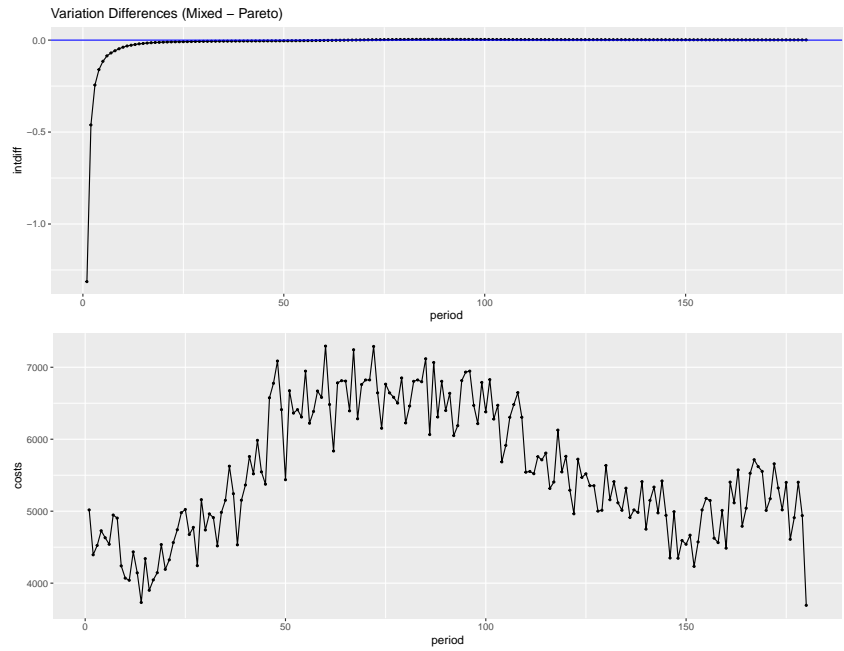}
	\caption{Variation Differences ($\delta > 1$)}
	\label{fig:LVar}
\end{figure}
In the extreme case, we drew a graph of comparison in a similar way. As a result, we can observe a smaller variation at the beginning and also when the claims explode. Again, it shows support for the mixed model from the smaller variance perspective, especially when other premium principles (e.g. the variance premium principle) are used instead of the pure premium principle adopted in our setup.
\begin{figure}
	\center
	\includegraphics[scale=0.4]{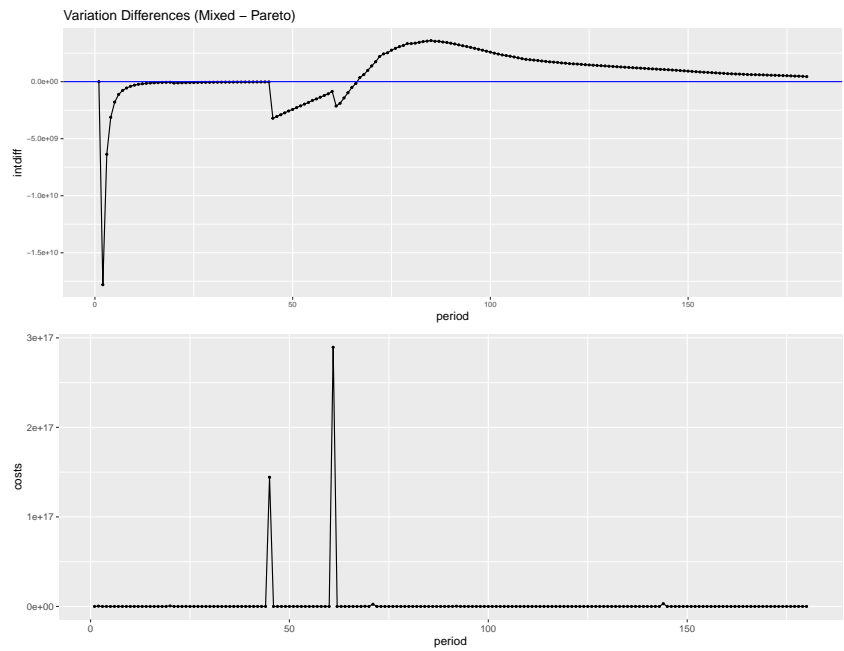}
	\caption{Variation Differences ($\delta \leq 1$)}
	\label{fig:HVar}
\end{figure}
\subsection{Findings}
 Based on the numerical results and comparison with other commonly used models, we can make a few summary statements. On the one hand, we proposed a kind of zero-inflated model that can be more practically employed by actuaries. According to the dataset considered here, the estimated probability of zeros per period is approximately 60\%, which is larger than traditional models. That implies a lower initial premiums which could facilitate the sales of policies. On the other hand, even though our model assesses the risks by allowing for more excess of zeros, we have set up a mechanism under the Bayesian framework such that the measurement of risks can be adjusted in a timely manner to the observed extreme events. Premiums suggested by the model will dynamically and automatically modify to incorporate the potential risks involved. In addition, the uncertainties involved in estimating risks are relatively small in our model. 

Therefore, our model could be useful in practice for its ability of sensitive detection of risks and efficient adjustment to risk measurement. Early detection of risky behaviours in policyholders would be most valuable for an insurer. This particularly applies to innovative policies covering e.g., autonomous cars and COVID-19, where initial information is very little or missing and discernment to potential risks is needed.
\section{Discussions on global likelihood} \label{sec:global}
In the previous sections, we were considering the independence between the claim frequency and severity components. It would be opportune to investigate the effect of dependence between the two 
quantities. One possible consideration has been done in \cite{cheung2019note} by adopting a conjugate bivariate prior distribution. Here, we will briefly introduce the idea of implementing a global 
likelihood in this section which takes into account {the two} components simultaneously for parameter estimations. 

In general, a random sample will be composed by a sequence of $m$ independent pairs of dependent observations $(N_i, \overrightarrow{Y_i})$, where $N_i$ represents the number of claims in the $i$-{th} period and $\overrightarrow{Y_i} = (Y_{i1}, \ldots, Y_{in_i})$ is the corresponding sequence of claim severities. It is clear that if we consider in general that both claim counts and severities may bring information about each stream, foreseable or unforeseable, we should consider that stochastic dependence between $N_i$ and corresponding  $\overrightarrow{Y_i}$ is present. However, they are conditionally independent, given $\Lambda=\Lambda_i\, $ ($i=1,2$) that is, for each individual stream we can consider the classical assumption of independence between claim counts and severities. 

Observed values are represented by corresponding lower case letters $(n_i, \overrightarrow{y_i})$. We also note that in each pair the dimension of vector $\overrightarrow{y_i}$ depend on the observed $n_i$. Now, we could write a likelihood function considering the joint random vector $(N, \overrightarrow{Y})$ assuming we have $m$ groups of observations and conditional independence of each $Y_{ij}$, $j = 1, \ldots, n_i$, for a given $i$:
\begin{eqnarray*}
	L(\vartheta | \mathbf{n, \overrightarrow{y}}) = \prod_{i=1}^m f_{N, \overrightarrow{Y}}(n_i, y_{i1}, \ldots, y_{in_i}) &=& \prod_{i = 1}^m f_{\overrightarrow{Y}|N}(\overrightarrow{y_i}| n_i) \mathbb{P}(n_i) = \prod_{i=1}^m \left\{\prod_{j=1}^{n_i} f_{\overrightarrow{Y}|N}(y_{ij}| n_i)\right\} \mathbb{P}(n_i)\,,
\end{eqnarray*}
where $\mathbb{P}(n_i)=\mathbb{P}(N=n_i)$, for simplification. 

For our model, we can write the corresponding global log-likelihood function, denoted as $\mathcal{L}(\vartheta | \mathbf{n, \overrightarrow{y}})$. If  we let claim counts conform to Lemma~\ref{l:mix_nb} and claim severities follow \eqref{eqn:clmdist}, i.e.~$f(\cdot)$ and $g(\cdot)$ are respectively Exponential and Pareto densities, we have
{\small\begin{eqnarray*}
		\mathcal{L}(\vartheta | \mathbf{n, \overrightarrow{y}}) &=& \sum_{i=1}^m \log \mathbb{P}(n_i) + \sum_{i=1}^m \sum_{j=1}^{n_i} \log f_{\overrightarrow{Y}|N}(\overrightarrow{y_i}| n_i)\\
		&=& \sum_{i=1}^m \log\left\{p\binom{n+\alpha_1-1}{n}\left(\frac{\beta}{\beta + 1}\right)^{\alpha_1}\left(\frac{1}{\beta + 1}\right)^n \right.\\
		& & \;\;\; \;\;\; 
		\left. + (1-p)\binom{n+\alpha_1+\alpha_2-1}{n}\left(\frac{\beta}{\beta + 1}\right)^{\alpha_1+\alpha_2}\left(\frac{1}{\beta + 1}\right)^n\right\}
		\\
	    && 
		 \;\;\; \;\;\; \;\;\; \;\;\; +
		 \sum_{i=1}^m \sum_{j=1}^{n_i} \log \left\{ \nu \mu e^{-\mu y_{ij}} + (1-\nu) \frac{\delta \sigma^\delta}{(\sigma+ y_{ij})^{\delta + 1}}\right\}.
\end{eqnarray*}}

Instead of building a likelihood function based on a sample of observations of the pair $(N, \overrightarrow{Y})$, we can build it using inter-arrival time and corresponding severity, where each observation is a bivariate pair, denoted as $(T_j, Y_j)$, of dependent random variables, in general. Let's denote the density and the distribution function of $T_j$ by $\phi(\cdot)$ and $\Phi(\cdot)$, respectively. 

For finding the distribution of $T_j$ recall that, conditional on $\Lambda =\lambda$, $\{N(t)\}$ is a  Poisson process with intensity $\lambda$. Then, given $\Lambda =\lambda$, conditional interarrival time $T_j|\Lambda =\lambda$ are exponentially distributed, with mean $\lambda^{-1}$. Whenever the Poisson parameter is an outcome of a random variable $\Lambda$ and it is distributed as \eqref{eqn:gammamix}, we have that the unconditional distribution of $T_j$ is a mixture of two Pareto distributions whose density is given by 
	\begin{equation}
	\label{eq:mixpareto}
	\phi(t_j)=p\,\frac{\alpha_1\beta^{\alpha_1}}{(\beta+t_j)^{\alpha_1+1}}
	+ (1-p)\, \frac{(\alpha_1+\alpha_2)\beta^{\alpha_1+\alpha_2}}{(\beta+t_j)^{\alpha_1\alpha_2+1}} \,.
	\end{equation}
On the other hand, the 
 density function of $Y_j$, conditional on a given $\Xi=\xi$, is given by the mixture, see~\eqref{eq:dfYxi},
	\begin{equation}
	\label{eq:mix_fg}
	h_\xi(y_j)= \xi f(y_j)+(1-\xi)g(y_j)\, ,
	\end{equation}
	and the unconditional density is given by \eqref{eqn:clmdist}. 
	 In particular, we assumed in our model that $f\sim exp(\mu)$, $g\sim Pareto(\delta,\sigma)$. 

From Remark~\ref{rem:indep}, we know that
for  $\Xi=\xi$ fixed, $T_j$ and $Y_j$ are independent. A random sample of observations are made of independent pairs $(Y_j, T_j)$, each $j$ of dependent $Y_i$ and $T_i$. Each pair, $j$, has density function
\begin{eqnarray}
	\label{eq:Ltheta}
	\phi(y_i,t_i) & = & \int_{\xi}h_\xi(y_i)\phi(t_i)dA(\xi) \\
	& = & (1-p) \int_{\xi\neq 1}h_\xi(y_i)\phi(t_i)dA(\xi) + p\, h_1(y_i)
	\frac{\alpha_1\beta^{\alpha_1}}{(\beta+t_i)^{\alpha_1+1}} \,  \nonumber
\end{eqnarray}
 where $A(\xi)$ stands for the distribution function of $\Xi$. 
 
 Note that that we separated the situations $\xi\neq 1$ and $\xi=1$. Also, the equivalent events  are$\{\Xi=1\}\Leftrightarrow \{ \Lambda^{(2)}=0\}$, implying that $Pr\{\Xi=1\}=Pr \{ \Lambda^{(2)}=0\}=p$.
Although we  consider the unconditional distribution of the arrival time (unconditional of $\Lambda$), we consider the conditional distribution for the individual claim size $Y$, given $\Xi=\xi$. 
$\Lambda^{(1)}$, given as a proportion of $\Lambda$, remains random and it  means that the distribution of $\Lambda^{(1)}$ is a scaled distribution of that of $\Lambda$, or \textit{vice versa}. Looking at the density $h_\xi(y_i)$ we see that the split rate $\xi$ gives the weight for the foreseeable stream claim amount, and looking at the density $\phi(y_j)$ we could think that the probability $p$ gives a similar meaning regarding the claim count stream. It does not seem to be the case, as it is not clear that the second part in the mixture represents the unforeseeable only.  

Returning to the joint density~\eqref{eq:Ltheta}, we build the likelihood function over $m^\ast$ pairs of observations ($m^\ast$ may be different from sample size $m$ from random vector $(N, \overrightarrow{Y})$ above)
\begin{eqnarray*}
L(\vartheta) &= &\prod_{i=1}^{m^\ast}\Phi(y_i,t_i)=
\prod_{i=1}^{m^\ast} \int_{\xi}h_\xi(y_i)\phi(t_i)dA(\xi) \\
& = & \prod_{i=1}^{m^\ast} \left[
(1-p)\int_{\xi\neq 1} \left(\xi f(y_i)+(1-\xi)g(y_i)\right) \phi(t_i)Beta(\xi;\alpha_1,\alpha_2)d\xi +p\, f(y_i)\, \frac{\alpha_1\beta^{\alpha_1}}{(\beta+t_i)^{\alpha_1+1}}
\right] ,
\end{eqnarray*}
where $Beta(\xi;\alpha_1,\alpha_2)$ denotes the Beta density function of $\Xi$, given $\Xi\neq 1$, and $\frac{\alpha_1\beta{\alpha_1}}{(\beta+t_i)^{\alpha_1+1}}$ is the density $\phi(t_j)$ when $\xi=1$.
Note that we cannot interchange the product ($\prod_{i = 1}^{m^\ast}$) and the integral. 

\section{Conclusions}
To conclude, on the model and the estimation procedure, this {article} has carried out a development and subsequent parameter estimation for the so-called unforeseeable risks discussed in 
\cite{li2015risk} for the claim counts arrival process. Precisely, for these risks, a probability $p$ has been assigned at mass point $\{0\}$ so that their corresponding counting process is 
distinguished from the classical one. Since we could only observe the entire claim counts from an insurance portfolio, this missing information could be estimated using the EM algorithm. Under certain 
assumptions for the distribution of heterogeneities within the portfolio (which was denoted as $\Lambda$ in this work), we could derive the randomness of the total claim counts given a fixed period to 
be a Negative Binomial mixture distribution. Thus, the likelihood function could be presented explicitly via the EM algorithm.  

We introduced a similar modelling to account the unforeseeable effect in the claim severity as well, by choosing a particular mixture distribution. This ended up with the existence of some common parameters inserted in the severity distribution to those of the claim counts one. This intends to show somehow the extra and common effect of an unforeseeable risk effect in both number of claims and severity. 

When implemented on a data set of claim counts for the third party liability insurance portfolio, from a Portuguese automobile insurer, the resulting estimates show high sensitivity to the chosen initial values. Hence, we employed starting values which were computed via method of moments. {All the Kolmogorov-Smirnov, Anderson-Darling and Cram\'{e}r-Von Mises tests} suggest a good fit on the observed data.

Concerning the claim severity component, due to lack of real data, we randomly generated data based on proposed mixed distribution and separated into two cases depending on whether the mean values are finite or infinite. We verified the EM algorithm by comparing with the true parameter values used in the simulation procedure.
	
When comparing with other traditional models, our model stands out for offering lower starting premia, faster adjustment to claim changes, and baring smaller variation in the resulting values. These are valuable features in practice and would be especially useful for pricing innovative policies covering such as autonomous cars or COVID-19. 

In our model we could consider the calculation of Bayesian and credibility premia separately for claim counts and severities, due to the independence situation between these two random variables. However, it would be feasible to compute a premium where we estimate parameters $ (\alpha_1,\alpha_2, \beta, \mu, \delta, \sigma, p)$ altogether, using a likelihood where $N$ and $Y$ are jointly distributed. We will refer to it as the global likelihood function and it will serve as a direct extension of the current work (as discussed).

Estimating a premium taking into account both foreseeable and unforeseeable risks not only makes the premium fairer, but also helps not underestimating portfolio's ruin probabilities. 

\section*{Acknowledgements}
{Authors thank Fundaci\'on MAPFRE, through 2016 Research Grants Ignacio H. de Larramendi, for the financial support given to the project [{\it ``TAPAS (Technology Advancement on Pricing Auto inSurance)"}]
	
Main research was done while Author $(^1)$ was staying on location at Research Centre CEMAPRE, ISEG, Universidade de Lisboa, Portugal. 
{Authors $(^1)$,$(^3)$ gratefully acknowledge support from Project CEMAPRE/REM--UIDB/05069/2020 - financed by FCT/MCTES (Funda\c c\~ao para a Ci\^encia e a Tecnologia/Portuguese Foundation for Science and Technology) through national funds.

\thanks{Authors $(^1)$,$(^4)$ gratefully acknowledge support by the LABEX MILYON (ANR-10-LABX-0070) of Universit\'{e} de Lyon, within the program ``Investissements d'Avenir" (ANR-11-IDEX- 0007) operated by the French National Research Agency (ANR)} .
}
\nocite{couvreur1997algorithm} \nocite{dempster1977maximum} \nocite{buhlmann2006course}
\nocite{redner1984mixture}

\bibliography{refs}

\appendix
\section{Proofs of Lemmas within the paper}

\begin{proof} Proof of Lemma \ref{l:mix_nb}.
Taking the moment generating function of $N$, as function of $\rho$,  yields

\begin{eqnarray*}
M_N(\rho) &=& \mathbb{E}\left[e^{\rho\,(N^{(1)}+N^{(2)})}\right] = M_{N^{(1)}}(\rho)M_{N^{(2)}}(\rho)\\
&=&\left(\frac{1-\frac{1}{1+\beta}}{1-e^\rho \frac{1}{1+\beta}}\right)^{\alpha_1}\left[p+(1-p)\left(\frac{1-\frac{1}{1+\beta}}{1-e^\rho \frac{1}{1+\beta}}\right)^{\alpha_2}\right]\\
&=&p\left(\frac{1-\frac{1}{1+\beta}}{1-e^\rho \frac{1}{1+\beta}}\right)^{\alpha_1}+(1-p)\left(\frac{1-\frac{1}{1+\beta}}{1-e^\rho \frac{1}{1+\beta}}\right)^{\alpha_1+\alpha_2}.
\end{eqnarray*}
We know that this corresponds to a weighted average of two Negative Binomial distributions, with weights $p$ and $1-p$ respectively. \hfill{}
\end{proof}\\

\begin{proof} Proof of Lemma \ref{l:dominik}.
Recall that given $\Lambda = \lambda$,
$
\mathbb{P}\{N=n |\Lambda =\lambda\} = {e^{-\lambda}\lambda^n}/{n!}
$.
Now we integrate over $\Lambda$ whose pdf, denoted as $\pi(.)$, can be written as
\begin{equation}
\label{eqn:gammamix}
\pi(\lambda) = p\cdot\frac{\lambda^{\alpha_1-1}\beta^{\alpha_1} e^{-\beta \lambda}}{\Gamma(\alpha_1)} + (1-p)\cdot\frac{\lambda^{(\alpha_1+\alpha_2)-1}\beta^{\alpha_1+\alpha_2} e^{-\beta \lambda}}{\Gamma(\alpha_1+\alpha_2)}.
\end{equation}
Hence,
\begin{eqnarray*}
\mathbb{P}\{N=n\} &=& \int_\Lambda \frac{e^{-\lambda}\lambda^n}{n!} \left(p\cdot\frac{\lambda^{\alpha_1-1}\beta^{\alpha_1} e^{-\beta \lambda}}{\Gamma(\alpha_1)} + (1-p)\cdot\frac{\lambda^{(\alpha_1+\alpha_2)-1}\beta^{\alpha_1+\alpha_2} e^{-\beta \lambda}}{\Gamma(\alpha_1+\alpha_2)}\right)\,\rm d\lambda\\
&=& p \cdot\int_\Lambda \frac{e^{-\lambda}\lambda^n}{n!}\frac{\lambda^{\alpha_1-1}\beta^{\alpha_1} e^{-\beta \lambda}}{\Gamma(\alpha_1)}\,\rm d\lambda + (1-p)\cdot\int_\Lambda \frac{e^{-\lambda}\lambda^n}{n!}\frac{\lambda^{(\alpha_1+\alpha_2)-1}\beta^{\alpha_1+\alpha_2} e^{-\beta \lambda}}{\Gamma(\alpha_1+\alpha_2)}\,\rm d\lambda\\
&=&p\binom{n+\alpha_1-1}{n}\left(\frac{\beta}{\beta + 1}\right)^{\alpha_1}\left(\frac{1}{\beta + 1}\right)^n\\
&& + (1-p)\binom{n+\alpha_1+\alpha_2-1}{n}\left(\frac{\beta}{\beta + 1}\right)^{\alpha_1+\alpha_2}\left(\frac{1}{\beta + 1}\right)^n \,.
\end{eqnarray*}
This distribution coincides with the one shown in Lemma~\ref{l:mix_nb}. \hfill{}
\end{proof}\\

\begin{proof} Proof of Lemma \ref{l:pstGmix}.
	Under the observations ${\bf n} = \{n_1,\ldots, n_m\}$,
	\begin{eqnarray*}
	\pi(\lambda | {\bf n}) &=& \frac{\prod_j \frac{e^{-\lambda}\lambda^{n_j}}{n_j!} \left\{p\cdot\frac{\lambda^{\alpha_1-1}\beta^{\alpha_1} e^{-\beta \lambda}}{\Gamma(\alpha_1)} + (1-p)\cdot\frac{\lambda^{(\alpha_1+\alpha_2)-1}\beta^{\alpha_1+\alpha_2} e^{-\beta \lambda}}{\Gamma(\alpha_1+\alpha_2)}\right\}}{{\displaystyle\int_\Lambda} \prod_j \frac{e^{-\lambda}\lambda^{n_j}}{n_j!} \left\{p\cdot\frac{\lambda^{\alpha_1-1}\beta^{\alpha_1} e^{-\beta \lambda}}{\Gamma(\alpha_1)} + (1-p)\cdot\frac{\lambda^{(\alpha_1+\alpha_2)-1}\beta^{\alpha_1+\alpha_2} e^{-\beta \lambda}}{\Gamma(\alpha_1+\alpha_2)}\right\}\,\rm d\lambda}\nonumber\\[2.5mm]
		%
	%
	&=& \frac{p\cdot \frac{\lambda^{\sum_j n_j +\alpha_1-1}\beta^{\alpha_1} e^{-(m+\beta) \lambda}}{\Gamma(\alpha_1)} + (1-p) \cdot \frac{\lambda^{\sum_j n_j+(\alpha_1+\alpha_2)-1}\beta^{\alpha_1+\alpha_2} e^{-(m+\beta) \lambda}}{\Gamma(\alpha_1+\alpha_2)}}{p\cdot\frac{\Gamma(\sum_j n_j+\alpha_1)}{\Gamma(\alpha_1)}\left(\frac{\beta}{\beta+m}\right)^{\alpha_1}\left(\frac{1}{\beta+m}\right)^{\sum_j n_j}+ (1-p)\cdot \frac{\Gamma(\sum_j n_j+\alpha_1+\alpha_2)}{\Gamma(\alpha_1+\alpha_2)}\left(\frac{\beta}{\beta+m}\right)^{\alpha_1+\alpha_2}\left(\frac{1}{\beta+m}\right)^{\sum_j n_j} }\nonumber\\[2.5mm]
		%
	%
	&=&\frac{1}{1+G(p, \alpha_1, \alpha_2, \beta, {\bf n})}\cdot \frac{ (\beta+m)^{\sum_j n_j +\alpha_1}\lambda^{\sum_j n_j +\alpha_1-1}e^{-(m+\beta) \lambda} }{\Gamma(\sum_j n_j+\alpha_1)} \nonumber
	\\[2.5mm]
		%
	%
	&& \hspace{0.5cm}  + \frac{G(p, \alpha_1, \alpha_2, \beta, {\bf n})}{1+G(p, \alpha_1, \alpha_2, \beta, {\bf n})}\cdot \frac{ (\beta+m)^{\sum_j n_j +\alpha_1+\alpha_2}\lambda^{\sum_j n_j +\alpha_1+\alpha_2-1}e^{-(m+\beta) \lambda} }{\Gamma(\sum_j n_j+\alpha_1+\alpha_2)}.
	\end{eqnarray*}
	The result is as described. \hfill
\end{proof}\\

\begin{proof} Proof of Lemma \ref{l:sevpost}.
	Under the observations ${\bf y} = \{y_1,\ldots, y_{m^\ast}\}$, and let $\Pi_\Theta(\cdot)$ be the distribution function of $\Theta$
	\begin{eqnarray}
	\pi(\lambda | {\bf n}) &=& \frac{\left(\prod_{i=1}^{m^\ast} \mu e^{-\mu y_i}\right) 
		\left\{\nu\Delta_\theta(\{\mu\})\right\}
		+
		\left(\prod_{i=1}^{m^\ast} \theta e^{-\theta y_i}\right)
		 \left\{ (1-\nu) \frac{\theta^{\delta-1} \sigma^{\delta} e^{-\sigma \theta}}{\Gamma(\delta)}\right\}}
	 {\int_\Theta \left(\prod_{i=1}^{m^\ast}  \theta e^{-\theta y_i}\right) \rm d\Pi_\Theta(\theta) }
	 	\nonumber\\
	&& \nonumber\\[0.1cm]
	&=& 
	\frac{\nu \left(\mu^{m^\ast} e^{-\mu \sum_i y_i}\right) \Delta_\theta(\{\mu\}) + (1-\nu) \, \frac{\theta^{{m^\ast}+\delta -1}\sigma^{\delta} e^{-(\sigma+\sum_i y_i) \theta}}{\Gamma({m^\ast}+\delta)}}
	{\nu \, \left(\mu^{m^\ast} e^{-\mu \sum_i y_i}\right) + (1-\nu) \, \frac{\Gamma({m^\ast}+\delta) \sigma^{{m^\ast}+\delta}}{\Gamma(\delta) (\sigma+\sum_i y_i)^{{m^\ast}+\delta}} }\nonumber\\
	&& \nonumber\\[0.1cm]
	&=&
	\frac{1}{1+\varphi(\nu, \mu, \delta, \sigma, {\bf y})} \, \Delta_\theta(\{\mu\}) + \frac{\varphi(\nu, \mu, \delta, \sigma, {\bf y})}{1+\varphi(\nu, \mu, \delta, \sigma, {\bf y})}\, \frac{ (\sigma+\sum_i y_i)^{{m^\ast}+\delta} \, \theta^{{m^\ast} +\delta-1} \, e^{-(\sigma+\sum_i y_i) \theta} }{\Gamma({m^\ast}+\delta)} \,. \nonumber \\
	& & 
	\end{eqnarray}
	The result is as described. \hfill
\end{proof}

\end{document}